\title{Losing the IR: a Holographic Framework for Area Theorems
}
\author[1]{Netta Engelhardt}
\author[2]{and Sebastian Fischetti}
\affiliation[1]{Department of Physics, Princeton University, Princeton, NJ 08544, USA}
\affiliation[2]{Theoretical Physics Group, Blackett Laboratory, Imperial College, London SW7 2AZ, United Kingdom}
\emailAdd{nengelhardt@princeton.edu}
\emailAdd{s.fischetti@imperial.ac.uk}
\abstract{Gravitational area laws are expected to arise as a result of ignorance of ``UV gravitational data''.  In AdS/CFT, the UV/IR correspondence suggests that this data is dual to infrared physics in the CFT.  Motivated by these heuristic expectations, we define a precise framework for explaining bulk area laws (in any dimension) by discarding IR CFT data.  In (1+1) boundary dimensions, our prescribed mechanism shows explicitly that the boundary dual to these area laws is strong subadditivity of von Neumann entropy.  Moreover, such area laws may be of arbitrary (and mixed) signature; thus our framework gives the first entropic explanation of mixed signature area laws (as well as area laws for certain dynamical causal horizons).  In general dimension, the framework is easily modified to include bulk quantum corrections, thus giving rise to an infinite family of bulk generalized second laws.

}
\begin{document}

\maketitle

\section{Introduction}
\label{sec:intro}

The thermodynamic properties of macroscopic systems, described by effective IR theories, are typically emergent from some underlying statistical mechanical description of more fundamental UV degrees of freedom.  For this reason, the thermodynamics of gravitational systems~\cite{Haw71,Haw72, Bek72, Bek73,BarCar73, Bek74, Haw74, Bek75, Haw75, Haw76, GibHaw77a} can provide tantalizing insights into the UV completion of gravity.  In fact, our own familiar classical spacetime itself may be emergent via the same coarse-graining mechanism of UV degrees of freedom that also results in gravitational thermodynamics (see~\cite{Sei06} for a review).  A thorough understanding of this process would be an invaluable asset in the quest towards an understanding of nonperturbative quantum gravity.

Of the various thermodynamic relations of gravitating systems, the correspondence between area and entropy is the most well-understood. The Bekenstein-Hawking  entropy of a surface $\sigma$,
\be
\label{eq:BH}
S_{BH}[\sigma]=\frac{\text{Area}[\sigma]}{4G_N \hbar},
\ee
was initially studied in the context of the black hole event horizon~\cite{Bek72}; however, it has since become clear that the relationship between $S_{BH}[\sigma]$ and some coarse-graining\footnote{Here and in the rest of this paper, we caution the reader that we use the term ``coarse-graining'' in the loose sense of ``discarding information''; this notion is much more general than just discarding UV degrees of freedom to generate an effective IR theory, as the term is often used.} associated to $\sigma$ is applicable far more generally than for event horizons, in keeping with expectations about the holographic nature of gravity~\cite{Bek81, Tho93, Sus95,  StrVaf96, Bou99b, Bou99c, Bou99d, RyuTak06, HubRan07, BiaMye12, EngWal17b}.  This leads to a general expectation that area monotonicity theorems in General Relativity~\cite{Haw71,Hay93, AshKri02, BouEng15a} are manifestations of the Second Law of Gravitational Thermodynamics.  Understanding this connection precisely in the context of a particular quantum theory of gravity thus requires an appropriate notion of what it means to ``coarse-grain'' over gravitational degrees of freedom, and what constitutes an appropriate measure of the data lost in such a coarse-graining.

Our purpose in this work is to provide such a definition.  We work in the context of the AdS/CFT correspondence~\cite{Mal97, Wit98a, GubKle98}, where it is widely believed that a consistent quantum theory of gravity is defined by the boundary CFT.  By relying on certain key properties of the holographic dictionary relating the boundary and the bulk -- namely, the UV/IR correspondence~\cite{SusWit98, PeePol98}, quantum error correction~\cite{AlmDon14}, and subregion/subregion duality~\cite{Van09,Van10,CzeKar12,DonHar16,Har16,FauLew17} -- we motivate a coarse-graining framework in the boundary theory which in general gives rise to a large class of gravitational bulk area laws (of arbitrary and sometimes mixed signature).  Moreover, in~(2+1) bulk dimensions, these area laws are an immediate consequence of the strong subadditivity (SSA) of von Neumann entropy in the boundary theory: their entropic significance is manifest.

To motivate our framework, let us begin with our key question: how should we think of coarse-graining in quantum gravity?  In the context of AdS/CFT, there are two existing approaches: Wilson-like holographic RG~\cite{SusWit98, PeePol98, Akh98, BalKra99, FreGub99, DebVer99, deBo01, Lee09, FauLiu10, HeePol10} and the Jaynes maximization of entropy subject to contraints~\cite{Jay57a, Jay57b}.  The former is very well-understood, having been developed only shortly after the advent of AdS/CFT itself; moreover, it is very precise, as it can be defined purely in the (non-gravitational) boundary field theory.  However, it is primarily tasked with understanding the structure of the RG flow of the \textit{boundary} theory: given a holographic (deformed) CFT, holographic RG constructs a bulk geometrization of the RG flow of the boundary theory. The portion of this bulk geometry inside of some ``radial cutoff'' is the dual to an effective low-energy theory in the boundary.  It is immediately clear that this is precisely the \textit{opposite} of what we should want from a gravitational perspective: if gravitational area laws are to arise from coarse-graining away ``quantum gravitational degrees of freedom'', roughly speaking we must coarse-grain away the \textit{interior} of the bulk, not the asymptotic region.

On the other hand, the Jaynes-like maximization of entropy subject to constraints initially appears much more promising.  This approach defines a coarse-grained entropy of a state as
\be \label{eq:coarseentropy}
S^{\mathrm{(coarse)}}= \max\limits_{\rho \in \Hcal} S_\mathrm{vN}[\rho],
\ee
where~$\Hcal$ is a subspace of the CFT Hilbert space consisting of density matrices~$\rho$ that all satisfy some constraints, and~$S_\mathrm{vN}[\rho] = -\Tr(\rho \ln \rho)$ is the usual von Neumann entropy of~$\rho$.  Clearly, the coarse-grained entropy is expected to increase under a reduction in the number of constraints (and thus an increase in the size of~$\Hcal$). This observation was recently used by~\cite{EngWal17b} to give an entropic explanation of an area law: in AdS/CFT, the area law for spacelike holographic screens~\cite{Bou99d, BouEng15a, BouEng15b} is a thermodynamic second law of a coarse-grained entropy as defined in~\eqref{eq:coarseentropy}, with $\Hcal$ the subspace of states on which the exterior of a surface is fixed to some specified geometry (but the interior is arbitrary up to constraints)\footnote{In~\cite{KelWal13} it was suggested that a Jaynes-type coarse-graining should also compute so-called causal holographic information, which corresponds to the area of causal horizons and thus would give a coarse-graining intepretation to the Hawking area law.  Unfortunately this conjecture was falsified in~\cite{EngWal17a}.}.  However, this approach raises a philosophical dilemma: if classical spacetime is not fundamental to a quantum theory of gravity, then why should we expect that general area laws should arise from coarse-graining over spacetime \textit{regions} (such as the interiors of holographic screens)?  Indeed, we should expect that more fundamentally, we must coarse-grain over \textit{information}\footnote{A step towards this was made in~\cite{EngWal17b}, where it was conjectured that a boundary-defined quantity -- the ``simple entropy'' -- is dual to the bulk entropy coarse-grained over regions.}.

We are thus presented with a puzzle: on the one hand, the Wilsonian RG notion of coarse-graining can be phrased in terms of fundamental degrees of freedom, but coarse-grains over the wrong kind of bulk data (i.e.~the near-boundary rather than the deep bulk).  On the other hand, the Jaynes prescription for spacelike holographic screens coarse-grains over the bulk interior as desired, but relies on specifying spacetime regions, which cannot in general correspond to precise quantum gravitational degrees of freedom.  The framework that we present here is constructed by drawing only the best features from each of these two approaches: like the Wilsonian RG approach, we will phrase the framework in terms of fundamental degrees of freedom, but like the Jaynes approach of~\cite{EngWal17b}, we will make sure to coarse-grain over the bulk interior, not exterior.

We  now outline the key ingredients of framework, which will be discussed in detail in Section~\ref{sec:coarsegraining}.  First, since the boundary theory \textit{defines} the bulk quantum gravity theory, the fundamental quantum gravity degrees of freedom are just the boundary field theoretic degrees of freedom.  Thus in order to phrase the coarse-graining prescription purely in terms of fundamental degrees of freedom, we formulate it entirely in the boundary theory.  Which boundary data do we want to coarse-grain over?  Here the intuition comes from the UV/IR correspondence: the removal of UV gravitational degrees of freedom should correspond to the removal of \textit{infrared} degrees of freedom of the boundary theory.    Subregion/subregion duality gives us a clue as to how to accomplish this: since the reduced state~$\rho_R$ associated to some boundary region~$R$ is dual to the entanglement wedge~$W_E[R]$ in the bulk, restricting ourselves to access only the reduced density matrices~$\rho_\lambda \equiv \rho_{R_\lambda}$ of some family of regions~$F = \{R_\lambda\}$ (to be defined precisely below) ensures that we lose all information about the ``deep bulk''.  From the boundary perspective, in coarse-graining from a full state~$\rho$ to the set of reduced density matrices~$\{\rho_\lambda\}$, we lose the IR information about long-range correlators and entanglement, as desired; see Figure~\ref{fig:division}.  In some sense this can be thought of as a highly non-standard Wilsonian RG: since the Callan-Symanzik equation relates RG scale to the separation of points in~$n$-point functions, our procedure removes knowledge of~$n$-point functions at low energy scales but keeps knowledge of arbitrarily high-energy ones.

\begin{figure}[t]
\centering
\includegraphics[page=1]{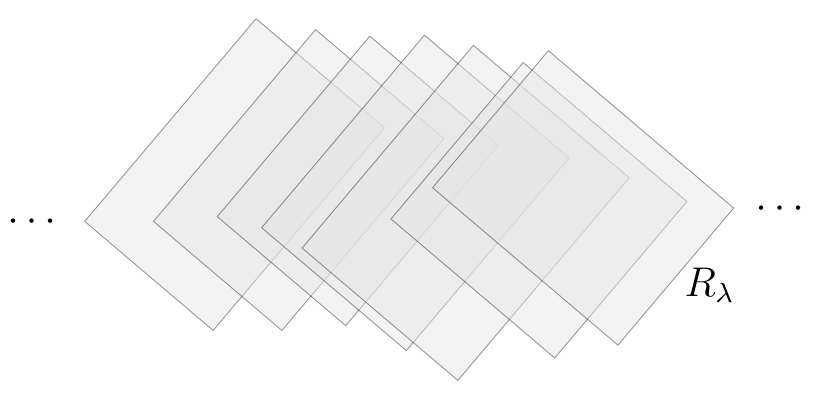}
\caption{Access to only the reduced states~$\{\rho_{R_\lambda}\}$ of some family of regions~$\{R_\lambda\}$ on the boundary removes IR data such as long-distance correlators and entanglement on the boundary.}
\label{fig:division}
\end{figure}

How do area laws arise from this prescription?  Here the understanding that AdS/CFT is a quantum error correcting code~\cite{AlmDon14} provides insight.  Coarse-graining from~$\rho$ to the~$\{\rho_\lambda\}$ introduces errors that cannot be corrected: some bulk regions become inaccessible.  If each~$\rho_\lambda$ is reduced to some~$\widetilde{\rho}_\lambda$ with~$\widetilde{R}_\lambda \subset R_\lambda$, we introduce further errors; that is, fewer messages can be decoded.  These errors are irreversible, and thus the process of continuously coarse-graining to smaller and smaller~$R_\lambda$ is a continuously irreversible process.  But such irreversible processes often result in some monotonicity property; indeed, we will see that the bulk manifestation of this monotonicity is an area theorem.  As we show in Section~\ref{sec:SSA}, in the special case of an (arbitrary) three-dimensional bulk this connection can be made completely precise.  In the boundary theory,
SSA implies that the so-called differential entropy of the family of regions~$\{R_\lambda\}$ is non-decreasing as the~$R_{\lambda}$ are shrunk.  Regardless of the physical interpretation of differential entropy (which is still lacking), its \textit{monotonicity} therefore simply \textit{is} SSA.  But by the hole-ographic prescription of~\cite{BalChoCze,BalCzeCho,MyeHea14}, differential entropy calculates the area of certain bulk surfaces constructed from the entanglement wedges~$W_E[R_\lambda]$; thus SSA (in the guise of monotonicity of differential entropy) manifests in bulk as area increase theorems.  Moreover, these area laws may be spacelike, null, or of mixed signature, yielding \textit{the first statistical entropic explanation for mixed-signature area laws and non-stationary causal horizons.} It is particularly curious that our mixed-signature area laws satisfy the same geometric properties as those of~\cite{BouEng15b}: they are constrained to flow outwards and towards the past (and the time reverse). This suggests that both area laws are a result of the same underlying mechanism.

The case of general dimension is discussed in Section~\ref{sec:generald}; though not as precise as in three bulk dimensions, we nevertheless find that for an appropriate choice of~$\{R_\lambda\}$ and~$\{\widetilde{R}_\lambda\}$ with~$\widetilde{R}_\lambda \subset R_\lambda$, there exist bulk surfaces~$\sigma$ and~$\widetilde{\sigma}$ constructed from the entanglement wedges~$W_E[R_\lambda]$,~$W_E[\widetilde{R}_\lambda]$ such that
\be
\mathrm{Area}[\widetilde{\sigma}] \geq \mathrm{Area}[\sigma].
\ee
Moreover, this area law is robust under perturbative quantum corrections: as we show in Section~\ref{sec:quantum}, in such a regime this area law becomes a Generalized Second Law (GSL)
\be
S_\mathrm{gen}[\widetilde{\sigma}] \geq S_\mathrm{gen}[\sigma],
\ee
where for the uninitiated reader, the generalized entropy~$S_\mathrm{gen}[\sigma]$ will be introduced with greater detail in Section~\ref{sec:quantum}. This natural extension under quantum corrections is strong evidence that the area laws obtained via our coarse-graining procedure are not accidental artifacts of the classical limit, but do indeed arise from some fundamental quantum gravitational features.

The present paper only scratches the surface of what can be done with our new coarse-graining framework; in Section~\ref{sec:disc} we conclude with a discussion of a number of future directions to pursue.

\textit{Note:}  The recent paper~\cite{NomRat18} may \textit{prima facie} appear to have some similarity with our results in Section~\ref{sec:generald}.  This similarity is superficial, and in fact the motivation presented here is completely opposite to that of~\cite{NomRat18}.  Here we are interested in coarse-graining away UV \textit{gravitational} degrees of freedom, which we heuristically interpret as the ``interior'' of the bulk; in~\cite{NomRat18}, on the other hand, the focus is on the more conventional coarse-graining away of UV \textit{boundary field theory} degrees of freedom, which correspond to the asymptotic region of the bulk (although a precise coarse-graining procedure is not prescribed in~\cite{NomRat18}).  Because of these complementary interpretations, the area laws obtained in~\cite{NomRat18} happen to coincide with ours in special cases, but only ours admit a precise entropic interpretation (in three bulk dimensions).

\paragraph{Preliminaries}

By a QFT, we will always mean a relativistic unitary quantum field theory.  We denote the bulk spacetime manifold by~$M$ and its boundary (on which the boundary theory lives)~$\partial M$.  We assume that the bulk has a good causal structure (e.g.~AdS hyperbolicity~\cite{Ger70}) for interpretational purposes, although our results are valid without this assumption. We use~$R$ to denote globally hyperbolic regions of~$\partial M$, which we sometimes call causal diamonds for simplicity.  The maximal development of a region~$\Sigma$ is denoted by~$D[\Sigma]$; we leave it clear from context whether this development is done in~$M$ or~$\partial M$.  Overlines~(e.g.~$\overline{R}$) will denote the complement of regions, while overlines with left subscripts will denote spatial complements: that is,~$^s \overline{R}$ denotes the set of all points spacelike-separated from all points in~$R$.  As with~$D[\Sigma]$, we will make clear from context whether these complements are taken in~$M$ or~$\partial M$.  We only assume the Null Convergence Condition to guarantee extremal wedge nesting~\cite{Wal12, EngHarTA}.  Other conventions are as in~\cite{Wald}.

The von Neumann entropy of a globally hyperbolic region~$R \subset \partial M$ in a state~$\rho$ is
\be
S_\mathrm{vN}[\rho_{\lambda}] = -\Tr\rho_R \ln\rho_R,
\ee
where~$\rho_R = \Tr_{^s \overline{R}} \rho$ is the reduced density matrix on~$R$.
By the HRT proposal~\cite{RyuTak06,HubRan07}, this entropy can be computed in a holographic state at leading order in~$1/N$ (equivalently, in~$G_N \hbar$) as
\be
S_\mathrm{vN}[\rho_R]= \frac{\text{Area}[X_R]}{4G_N\hbar},
\ee
where $X_R$ is the (bulk) minimal area extremal surface homologous to a Cauchy slice~$\Sigma_R$ of~$R_{\lambda}$. The homology constraint by definition requires the existence of a (highly nonunique) achronal hypersurface $H_R$ whose boundary is $\partial H_R = X_{\lambda} \cup \Sigma_R$.  The entanglement wedge is then defined as the AdS domain of dependence of this hypersurface:
\be
W_{E}[R] \equiv D[H_R \cap R].
\ee

\section{The Coarse-Graining Prescription}
\label{sec:coarsegraining}

Our goal is now to define the coarse-graining prescription motivated in Section~\ref{sec:intro}.  This procedure should be defined in the boundary theory, and it should behave in such a way that as we coarse-grain over more CFT data, we recover progressively less of the deep bulk interior. The holographic intuition comes from subregion/subregion duality and quantum error correction, but for purposes of generality, we define the procedure without reference to a holographic dual.

Consider, then, a QFT on a $d$-dimensional, maximally extended spacetime manifold $\partial M$.  As discussed in Section~\ref{sec:intro}, we will coarse-grain away information by introducing a continuous family of globally hyperbolic regions~$\{R_\lambda\}$ (parametrized by some set of parameters~$\lambda$) and then restricting a state~$\rho$ on~$\partial M$ to the set of reduced states~$\{\rho_\lambda\}$.  Of course, in principle we may perform such a procedure for any set of regions~$\{R_\lambda\}$, but in order to sensibly think of our procedure as coarse-graining away ``independent'' data, we would like to exclude situations like those shown in Figure~\ref{fig:bad}, where some of the~$R_\lambda$ lie in the interior or in the future of others.

\begin{figure}[t]
\centering
\includegraphics[page=2]{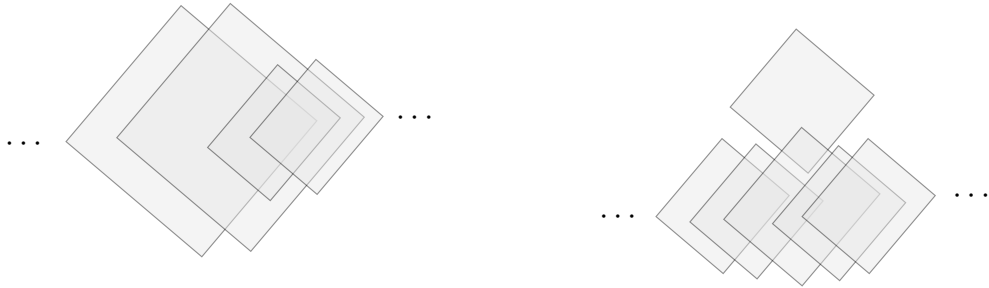}
\caption{Examples of families of regions~$\{R_\lambda\}$ that we exclude from our coarse-graining: none of the~$R_\lambda$ should be a subset of any of the others and should not lie entirely in the future or past of any of the others.}
\label{fig:bad}
\end{figure}

It is easy to require that none of the~$R_\lambda$ be contained in the future of any other; this can be accomplished by requiring that the union of the~$R_\lambda$ be bounded by two Cauchy slices~$\Sigma^\pm$ and that the boundary of each~$R_\lambda$ intersects both, as shown in Figure~\ref{fig:CauchySlice}.  More specifically, we would like to impose some notion of the~$R_\lambda$ being ``spacelike-separated'' from one another.  However, since the~$R_\lambda$ are a continuous family, they cannot be disjoint.  What, then, can we mean?  To develop some intuition, consider the case where the~$R_\lambda$ can be written as~$R_\lambda = D[I_\lambda]$, where~$I_\lambda \subset \Sigma$ are regions on some acausal slice~$\Sigma$; see Figure~\ref{subfig:regionsa}.  If none of the~$I_\lambda$ are contained in any others, the resulting family~$\{R_\lambda\}$ is one intuitively appropriate to our coarse-graining procedure.

\begin{figure}[t]
\centering
\includegraphics[page=3]{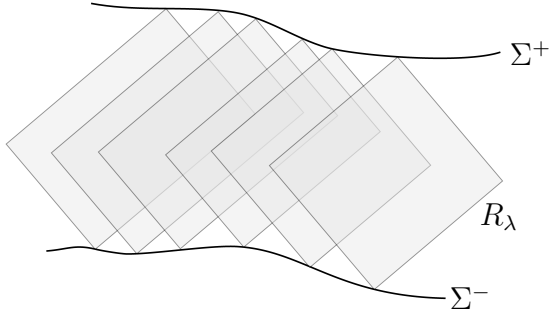}
\caption{An example of a family $\{R_{\lambda}\}$ with the two bounding Cauchy slices $\Sigma^{\pm}$.}
\label{fig:CauchySlice}
\end{figure}

\begin{figure}[t]
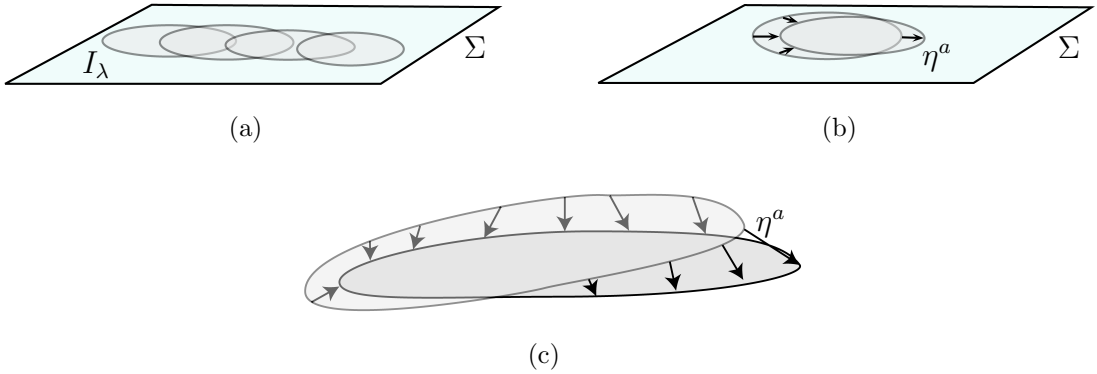

\centering
\subfigure[]{
\includegraphics[page=4]{Figures-pics}
\label{subfig:regionsa}
}
\hspace{0.5cm}
\subfigure[]{
\includegraphics[page=5]{Figures-pics}
\label{subfig:regionsb}
}
\hspace{1cm}
\subfigure[]{
\includegraphics[page=6]{Figures-pics}
\label{subfig:regionsc}
}
\caption{\subref{subfig:regionsa}: a family of regions suitable for our coarse-graining can be generated as the domains of dependence~$D[I_\lambda]$ of some family of regions~$I_\lambda$ contained on some acausal hypersurface~$\Sigma$.  \subref{subfig:regionsb}: such regions may be said to be spacelike-separated because any deviation vector~$\eta^a$ from one~$I_\lambda$ to an infinitesimally displaced one is everywhere spacelike.  \subref{subfig:regionsc}: even when the~$I_\lambda$ no longer all lie on~$\Sigma$, we may still think of them as ``spacelike separated'' and, thus appropriate for our coarse-graing, if all~$\eta^a$ are on average spacelike.}
\label{fig:regions}
\end{figure}

In fact, we will allow even more general families, motivated as follows (readers who are willing to accept the definition without motivation may wish to skip ahead).  First, note that since each~$R_\lambda$ is globally hyperbolic, it must admit at least one Cauchy slice~$\Sigma_\lambda$; the boundary~$\partial \Sigma_\lambda$ must in fact be independent of the choice of Cauchy slice, so we will use the notation~$\partial \Sigma_\lambda$ without ever explicitly invoking a choice of~$\Sigma_\lambda$\footnote{Alternatively, we may also write~$\partial \Sigma_\lambda = \partial R_\lambda \cap \partial \, ^s \overline{R}_\lambda$, which clearly does not require invoking any~$\Sigma_\lambda$.}.  Now, note that for each~$\partial \Sigma_\lambda$, an arbitrary variation of~$\lambda$ defines a deviation vector field~$\eta^a_\lambda$ normal to~$\partial \Sigma_\lambda$ (roughly speaking,~$\eta^a_\lambda$ encodes the ``infinitesimal deviation'' from~$\partial \Sigma_\lambda$ to~$\partial \Sigma_{\lambda + d\lambda}$).  In general the family~$\{R_\lambda\}$ will be~$(d-1)$-dimensional, and so we may introduce~$(d-1)$ linearly independent deviation vector fields on each~$\partial \Sigma_\lambda$; here we will simply treat one at a time\footnote{To be completely explicit, let~$\{\lambda^i\}$ be the~$d-1$ parameters parametrizing the family~$\{R_\lambda\}$.  Taking~$\partial \Sigma_\lambda$ to be spacelike, we may define the~$d-1$ deviation vector fields~$\eta^a_{\lambda^i} \equiv {P^a}_b (\partial_{\lambda^i})^b$, where~${P^a}_b$ is the projector normal to~$\partial \Sigma_\lambda$.  We schematically use~$\eta^a_\lambda$ to refer to any one such deviation vector field.}.

For the family~$\{R_\lambda\}$ constructed from the regions~$I_\lambda$ as above, we have~$\partial \Sigma_\lambda = \partial I_\lambda$, and thus the acausality of~$\Sigma$ implies the acausality of~$\eta^a_\lambda$.  In other words,~$\eta^2_\lambda \geq 0$ with equality only if~$\eta^a_\lambda$ vanishes.  If we further require that no two regions in the family~$\{R_\lambda\}$ coincide, then~$\eta^a_\lambda$ cannot be everywhere-vanishing.  Thus we conclude that~$\eta^2_\lambda$ must be strictly positive when integrated on~$\partial \Sigma_\lambda$ (with respect to the natural volume form):~$\int_{\partial \Sigma_\lambda} \eta^2_\lambda > 0$ for any~$\eta^a_\lambda$.  This implies that for sufficiently small deformations of the~$I_\lambda$ which move them away from a common hypersurface~$\Sigma$, the relation~$\int_{\partial \Sigma_\lambda} \eta^2_\lambda > 0$ still holds for any~$\eta^a_\lambda$, as shown in Figure~\ref{subfig:regionsb}.  In other words, we may still say that the deviation vectors~$\eta^a_\lambda$ are ``on average'' spacelike.  We will take this property, which can be invoked on \textit{any} family~$\{R_\lambda\}$, as the defining feature of what we mean by a continuous family of ``spacelike-separated'' regions.  We thus define:
\begin{defn}
Let~$F$ be a $(d-1)$-parameter continuous family\footnote{Strictly speaking this continuity requirement implies that a coarse-graining family can only exist if~$\partial M$ is connected.  The generalization to disconnected~$\partial M$ can be performed by introducing a coarse-graining family on each connected component of~$\partial M$.} of connected causal diamonds~$\{R_\lambda\}$ in $\partial M$ parametrized schematically by a set of parameters~$\lambda$.  Define~$\partial \Sigma_\lambda$ for each~$R_\lambda$ as above.  We will call~$F$ a \textit{coarse-graining family} if the following are true:
\begin{itemize}
	\item Each $\partial \Sigma_\lambda$ is everywhere spacelike;
	\item $\partial \cup_\lambda R_{\lambda}$ consists of two Cauchy slices~$\Sigma^\pm$ of~$\partial M$ with~$\partial R_{\lambda} \cap \Sigma^\pm \neq \varnothing$ for all $\lambda$;
	\item For any deviation vector field~$\eta^a_\lambda$ normal to~$\partial \Sigma_\lambda$,~$\int_{\partial \Sigma_\lambda} \eta^2_\lambda > 0$ (with the integral taken with respect to the natural volume element on~$\partial \Sigma_\lambda$).
\end{itemize}
\end{defn}

Next, recall the definition of the reduced density matrix:
\begin{equation}
\rho_{\lambda} \equiv \text{Tr}_{^s \overline{R}_\lambda}\rho,
\end{equation} 
where $\rho$ is the state on $\partial M$.  We would now like to restrict access to data (observables) which are computable from the state restricted to the causal diamonds in the coarse-graining family $F$; that is, we would like to discard data that cannot be recovered from any of the~$\rho_\lambda$.  To do this, we declare two states~$\rho$ and~$\widetilde{\rho}$ to be equivalent under IR coarse grainings associated to the family~$F$ (``$F$-equivalent'' for short) if~$\rho_{\lambda} = \widetilde{\rho}_{\lambda}$ for all $\lambda$.  The coarse-grained state~$\rho_F$ of~$\rho$ is defined as the equivalence class of~$\rho$ under this equivalence, or correspondingly as the set~$\{\rho_\lambda\}$ of reduced density matrices on~$F$.

Before discussing the ramifications of this definition, it is worth asking: is this in fact a coarse-graining?  That is, can two distinct states yield the same coarse-grained state~$\rho_F$?  It is easy to see that in discrete physical systems this is so: in the case of e.g.~a spin chain, we can take~$\lambda$ to index spin sites and each~$R_\lambda$ to be a collection of adjacent spins; then it is easy to find examples of states which are different on the whole spin chain but whose density matrices agree when reduced to any of the~$R_\lambda$.  Similarly, at any finite order in~$1/N$ in AdS/CFT, this procedure is also clearly a coarse-graining: for instance, at leading order in~$1/N$, we may consider two states whose dual bulk geometries on some Cauchy slice~$\Sigma$ agree near the boundary but differ deeper in the bulk.  If the~$R_\lambda$ are chosen sufficiently small, they can only sample the geometry in the asymptotic region, and thus the reduced states must agree\footnote{Note the importance of the family~$F$ only sampling some time strip of the boundary.  In the present example, if two states agree on a bulk Cauchy slice outside of some fiducial radial cutoff~$r_*$ but differ inside of~$r_*$, time evolution would eventually cause their geometries to differ all the way to the boundary once causal signals from within~$r_*$ propagate out.  The coarse-grained states are then guaranteed to agree only if the size of the time strip containing the family~$F$ is smaller than this propagation time.}.  Perturbative corrections in~$1/N$ (adding quantum bulk fields, gravitational dressing, etc.) proceed similarly.

We do not know if non-perturbative states of continuum QFTs also admit nontrivial equivalence classes.  But since we are working perturbatively in~$1/N$ anyway, it is sufficient for our purposes that our equivalence relation be nontrivial in AdS/CFT only to finite order in~$1/N$ (that is, we may replace the condition that~$\rho_\lambda = \widetilde{\rho}_\lambda$ with equality only to finite order in~$1/N$).  Thus for our present purposes, the map from~$\rho$ to~$\rho_F$ really does provide a coarse-graining operation.

This map has a clear interpretation as discarding IR data; for instance, a long-distance two-point correlation function of a local operator $\langle {\cal O}(x_{1}){\cal O}(x_{2})\rangle$ is inaccessible to $\rho_{F}$ whenever the $R_{\lambda}$ are sufficiently small that no single one of them contains both $x_{1}$ and $x_{2}$.  By comparison, for arbitrarily close $x_{1}$ and $x_{2}$, any coarse-graining family $F$ covering them will retain knowledge of $\langle {\cal O}(x_{1}){\cal O}(x_{2})\rangle$.  Put differently, for any smeared-out observable ${\cal O}$, there will be some coarse-grained $\rho_{F}$ with no knowledge of ${\cal O}$; however, one-point functions of local operators can always be computed no matter what the family $F$ is.  In what some readers may see as abuse of nomenclature, we shall thus refer to the map from~$\rho$ to~$\rho_F$ as an IR coarse-graining.

Recall that our goal is not just to define a coarse-graining, but also to compare coarser and finer data sets. To do this, we need to be able to compare coarse-graining families:

\begin{defn}
Let $F= \{R_{\lambda}\} $ and $\widetilde{F} = \{\widetilde{R}_\lambda\}$ be two coarse-graining families, with $\widetilde{R}_\lambda \subset R_\lambda$ for all $\lambda$. Then $\widetilde{F}$ is IR coarser than $F$.
\end{defn}

An example of such families is shown in Figure~\ref{fig:IRcoarsening}.  In fact, ultimately we are really interested in a continuous notion of coarse-graining:

\begin{figure}[t]
\centering
\includegraphics[page=7]{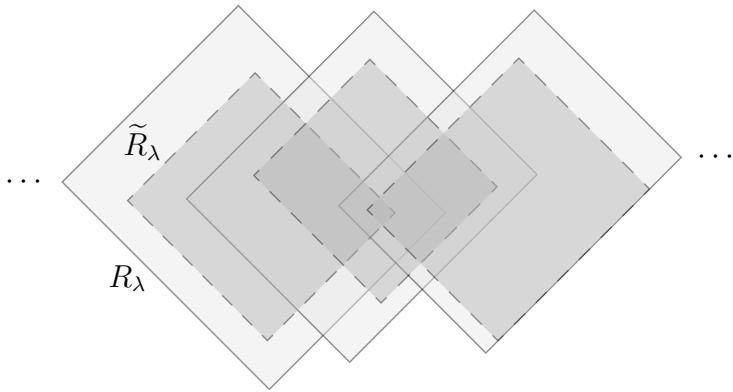}
\caption{A coarse-graining family~$F$ (light gray, solid lines) and an IR coarser family~$\widetilde{F}$ (dark gray, dashed lines).  Note that each causal diamond~$\widetilde{R}_\lambda$ lies inside (or perhaps may coincide with)~$\widetilde{R}$, so the reduced states~$\widetilde{\rho}_\lambda$ can access less data than the states~$\rho_\lambda$.}
\label{fig:IRcoarsening}
\end{figure}

\begin{defn}
Let $F= \{R_{\lambda}\} $ be a coarse-graining family, and consider continuously shrinking each $R_{\lambda}$ into some $R_{\lambda}(r)\subset R_{\lambda}$, where $r\geq 0$ indexes the continuous deformation; that is, $R_{\lambda}(r=0)=R_{\lambda}$ and $R_{\lambda}(r_{1})\subset R_{\lambda}(r_{2})$ whenever $r_{1}>r_{2}$. Let $F(r)=\{R_{\lambda}(r)\}$. Then $F(r)$ is IR coarser than $F$ for any $r>0$; we call $F(r)$ a continuous IR coarse-graining. 
\end{defn}

With the coarse-graining prescription now defined, let us analyze its bulk interpretation when the QFT is holographic.  Consider states of the CFT that describe a (semi)classical bulk dual~$(M,g_{ab})$.  Subregion/subregion duality asserts that there is an isomorphism between the algebra of operators in the entanglement wedge~$W_E[R]$ and the algebra~$\Acal[R]$ in the boundary causal diamond~$R$.  Here we will invoke subregion/subregion duality in a strong form, where we assume that all fields, \textit{including the metric}, in~$W_E[R]$ are fixed by~$\Acal[R]$ (perturbatively in~$N$, and more than a Planck length away from the boundary of $W_{E}[R]$).  This statement is known to be true at the level of quantum fields on a fixed background spacetime, as proven in~\cite{AlmDon14} and expanded upon in~\cite{Har16,FauLew17}: in this regime,~$\Acal[R]$ and the reduced density matrix $\rho_R$ can compute any observable in $W_{E}[R]$ and cannot compute any observable in $W_{E}[\, ^s \overline{R}]$\footnote{Note that this does \textit{not} include any claim about the reconstructibility of the region causally related to $W_{E}[R]$.}.  For readers skeptical of the strong version of subregion/subregion duality that we assume here, we note that our results can be restricted to work within this weaker version known to be true.

As a result of subregion/subregion duality, the bulk duals (if they exist) of two~$F$-equivalent states must agree in the region~$\cup_\lambda W_{E}[R_\lambda]$ defined by the union of the entanglement wedges of the coarse-graining family~$F$, while~$F$-equivalence implies nothing (perhaps up to constraints) about the region~$\cap_\lambda W_E[\, ^s \overline{R}_\lambda]$; see Figure~\ref{fig:wedgeunionintersections} for sketches of these regions.  Note in particular that the latter region is a ``deep bulk'' region: thus as desired, the coarse-graining from~$\rho$ to~$\rho_F$ removes data in the deep bulk.  Indeed, we may interpret this feature by borrowing intuition from quantum secret sharing properties of AdS/CFT~\cite{AlmDon14}: if we consider some bulk field $\phi(x)$ a message to be decoded and $\{R_{\lambda}\}$ as the qubits available, then having access to any of the individual $R_{\lambda}$ may not be sufficient to decode $\phi(x)$, while having access to sufficiently large unions $R_{\lambda_1} \cup \cdots \cup R_{\lambda_n}$ will indeed be sufficient. This is precisely the intuition on which we rely: as we move along a continuous IR coarse-graining $F(r)$ to larger $r$, we recover progressively less of the bulk. We may think of a continuous IR coarse-graining as a quantum error correcting code which is becoming progressively weaker\footnote{We thank A. Almheiri for discussions on this point.}.

\begin{figure}[t]
\centering
\includegraphics[page=8]{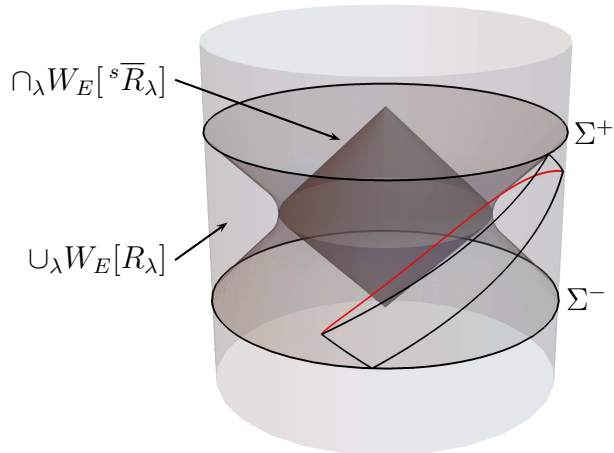}
\caption{Here we show a sketch of the bulk regions~$\cup_\lambda W_{E}[R_\lambda]$ and~$\cap_\lambda W_E[\, ^s \overline{R}]$, which are respectively completely specified and unconstrained by a coarse-grained state~$\rho_F$.  We also show one of the causal diamonds in the coarse-graining family~$F$ along with its corresponding extremal surface~$X_R$.}
\label{fig:wedgeunionintersections}
\end{figure}

\section{Implementation in $D=3$}
\label{sec:SSA}

Let us now illustrate the bulk implementation and interpretation of the coarse-graining prescription described in Section~\ref{sec:coarsegraining}.  In this section we will focus on the case of a two-dimensional boundary theory with a three-dimensional bulk dual, where the connection to monotonicity properties and area laws can be made most precise and explicit.  In fact, the precise results that we present here also hold in higher-dimensional setups with sufficient symmetry to be essentially three-dimensional; the generalization to generic higher dimensional spacetimes will be presented in Section~\ref{sec:generald}.

\subsection{Monotonicity from Strong Subadditivity}

The coarse-graining prescription presented in Section~\ref{sec:coarsegraining} was designed to discard IR CFT data: indeed, it rendered long-range correlators inaccessible and removed long-range entanglement.  Typically, under coarse-graining operations it is often useful to identify a number that roughly measures ``how much'' information is being made inaccessible.  For example, in going from a full state~$\rho$ to a particular reduced state~$\rho_R$ associated to a region~$R$, the entanglement entropy provides such a measure of information ``loss''.  Indeed, entropic inequalities often play a role in quantifying coarse-graining: for instance, SSA of von Neumann entropy, which states that the von Neumann entropies of any regions~$A$,~$B$, and~$C$ must obey
\be
\label{eq:SSA}
S(ABC) + S(B) \leq S(AB) + S(BC),
\ee
is interpreted as a statement on the irreversivility of removing subsystems. This is easily seen by rewriting~\eqref{eq:SSA} in terms of the quantum mutual information $I(A:B) \equiv S(A) + S(B)-S(AB)$, which measures correlations between states on $A$ and $B$.  SSA can be rewritten as $I(A:BC)\geq I(A:B)$: correlations are destroyed irreversibly when we discard a subsystem. 

In our case, we wish to find an object constructed from a state~$\rho$ and a coarse-graining family~$F = \{R_\lambda\}$ which can be interpreted as the amount of information lost in coarse-graining from~$\rho$ to~$\rho_F$.  Fortunately, for~$(1+1)$-dimensional field theories, a candidate already exists: the \textit{differential entropy}~\cite{BalChoCze}, which we define first for a discretized family of regions as
\begin{defn}
In any~$(1+1)$-dimensional QFT on a spacetime with compact spatial slices, let~$\{R_i\}$ with~$i = 1, \ldots, n$ be a discrete family of causal diamonds such that~$R_i \cap R_{i+1} \neq \varnothing$.  Then in any state~$\rho$, the discrete differential entropy of~$\{R_i\}$ is
\be
\label{eq:discreteSdiff}
S_\mathrm{diff}^{(n)}[\{R_i\}] = \sum_{i = 1}^n \left[S(R_i) - S(R_i \cap R_{i+1})\right],
\ee
where it is understood that~$R_{n+1} = R_1$.
\end{defn}
An illustration of the regions used to construct the discrete differential entropy is shown in Figure~\ref{fig:discreteSdiff}.  Also note that although differential entropy is computed from entanglement entropy, which is UV-divergent, these singularities cancel out in the differences in~\eqref{eq:discreteSdiff}, so the differential entropy is in fact UV-finite.

\begin{figure}[t]
\centering
\includegraphics[page=9]{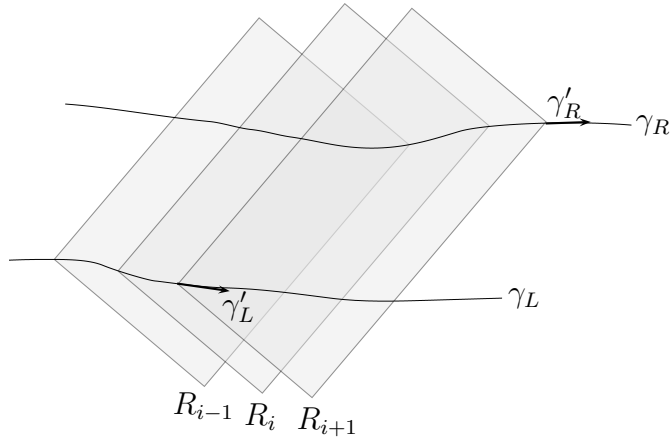}
\caption{A sketch of the discrete family of regions~$\{R_i\}$ that defines the discretized differential entropy.  In the case that~$\{R_i\}$ is a discretized coarse-graining family, the~$R_i$ are defined from the curves~$\gamma_L$,~$\gamma_R$, which must always be spacelike and such that~$\gamma'_L$ and~$\gamma'_R$ point into and out of~$R_\lambda$, respectively.}
\label{fig:discreteSdiff}
\end{figure}

In~\cite{BalChoCze} it was suggested via holographic arguments that~$S_\mathrm{diff}^{(n)}$ is a measure of the ignorance of a family of observers confined to make measurements only in the causal diamonds~$R_i$.  More precisely,~\cite{CzeHay14} showed that, at least in certain contexts, differential entropy can be interpreted in an information theoretic sense as the optimal cost of sending a state between two observers under a constrained merging protocol, with the constraint that the observers involved may only act on one of the~$R_\alpha$ at a time.  Independently of the general applicability of these interpretations, however, we claim that when the~$R_i$ are obtained from a coarse-graining family, SSA ensures that the differential entropy obeys a monotonicity property under progressive IR coarsening.  To establish this result, let us first set up some convenient notation.  In~$(1+1)$ dimensions, a coarse-graining family~$F = \{R_\lambda\}$ is parametrized by a single parameter~$\lambda$.   We will say that~$F^{(n)} = \{R_i\}$ for~$i = 1, \ldots, n$ is a \textit{discretized version} of~$F$ if for all~$i$,~$R_i = R_{\lambda_i}$ for some~$\lambda_i$ with~$\lambda_i < \lambda_{i+1}$.  Next, note that each region~$R_\lambda$ is a causal diamond which can be defined just by the positions of its left and right endpoints~$\gamma_L(\lambda)$ and~$\gamma_R(\lambda)$, or equivalently by its past and future endpoints~$\gamma^\pm(\lambda)$; as~$\lambda$ is varied, these trace out the curves~$\gamma_L$,~$\gamma_R$,~$\gamma^+$, and~$\gamma^-$.  These curves obey some useful properties:
\begin{prop}
Let~$F = \{R_\lambda\}$ be a coarse-graining family in a~$(1+1)$-dimensional spacetime, and let the curves~$\gamma_L$,~$\gamma_R$ be defined as above.  These curves are nowhere timelike, with one of the tangent vectors~$\gamma_L'$,~$\gamma_R'$ (with~$' = \partial/\partial\lambda$) always pointing towards~$R_\lambda$ and the other always away from~$R_\lambda$.
\end{prop}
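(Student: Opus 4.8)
The plan is to pass to null coordinates, in which the causal-diamond structure becomes a coordinate rectangle and the four corner curves can be read off directly. Since any $(1+1)$-dimensional spacetime is conformally flat, I would introduce null coordinates $(u,v)$ with $ds^2 = -\Omega^2\,du\,dv$ for some $\Omega>0$, so that each $R_\lambda$ is the rectangle $\{u^-(\lambda)\le u\le u^+(\lambda),\ v^-(\lambda)\le v\le v^+(\lambda)\}$. The two spatial corners and the two tips are then $\gamma_L=(u^+,v^-)$, $\gamma_R=(u^-,v^+)$, and $\gamma^\pm=(u^\pm,v^\pm)$, giving $\gamma_L'=((u^+)',(v^-)')$, $\gamma_R'=((u^-)',(v^+)')$, and $\gamma^{\pm}{}'=((u^\pm)',(v^\pm)')$, with $'=\partial/\partial\lambda$. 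Because the metric has the form above, a vector with null components $(a,b)$ is spacelike, null, or timelike according as $ab$ is negative, zero, or positive, so the whole statement reduces to tracking four signs.

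First I would extract the tip conditions from the defining bullets. The condition that $\partial\cup_\lambda R_\lambda$ consists of the two Cauchy slices $\Sigma^\pm$ forces the future tips $\gamma^+(\lambda)$ to sweep out $\Sigma^+$ and the past tips $\gamma^-(\lambda)$ to sweep out $\Sigma^-$; since these are Cauchy (hence nowhere timelike) slices, $\gamma^{\pm}{}'$ is nowhere timelike, i.e.\ $(u^+)'(v^+)'\le 0$ and $(u^-)'(v^-)'\le 0$, with strict inequality wherever the tangents are nonzero. On each connected slice a nonvanishing spacelike tangent cannot flip the sign of $(u)'$ or $(v)'$ without passing through a null point, so $(u^+)'$ and $(v^+)'$ carry constant, opposite signs, and likewise $(u^-)'$ and $(v^-)'$. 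The key observation is then that $\gamma_L'$ and $\gamma_R'$ always share the \emph{same} signature: using that $(v^\pm)'$ has sign opposite to $(u^\pm)'$, both $(u^+)'(v^-)'$ and $(u^-)'(v^+)'$ have sign equal to $-\,\mathrm{sgn}\,(u^+)'\,\mathrm{sgn}\,(u^-)'$. Hence the two corner tangents are either both spacelike (when the tips advance in the same $u$-direction) or both timelike (when they advance in opposite $u$-directions). The "both timelike" alternative is excluded by the remaining bullet: in $(1+1)$ dimensions $\partial\Sigma_\lambda=\{\gamma_L,\gamma_R\}$ is a pair of points, the normal projector is the identity, and the deviation-vector condition collapses to $|\gamma_L'|^2+|\gamma_R'|^2>0$, which cannot hold if both vectors are timelike. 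This leaves only the spacelike branch, proving $\gamma_L,\gamma_R$ are nowhere timelike. For the second claim I would compute the spatial displacement $\delta x\propto v'-u'$ of each corner: in the surviving configuration both corners translate in the same spatial direction, so one edge advances into $R_\lambda$ while the other recedes from it, and the globally constant sign pattern prevents this in/out assignment from ever flipping with $\lambda$.

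The main obstacle I anticipate is the first step rather than the sign bookkeeping: rigorously establishing that the future and past tips actually lie on the bounding slices $\Sigma^\pm$, so that $\gamma^{\pm}{}'$ genuinely inherit the achronality of $\Sigma^\pm$. This requires arguing that the part of $\partial R_\lambda$ meeting $\Sigma^\pm$ is exactly the tip and that no tip sits in the interior of a neighboring diamond—intuitively a consequence of the non-nesting built into the coarse-graining-family definition—and then carefully handling the isolated null points where the strict sign inequalities degenerate. This last point is precisely where the weaker "nowhere timelike" phrasing (rather than "everywhere spacelike") is needed.
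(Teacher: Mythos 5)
Your proposal is correct and follows essentially the same route as the paper's proof: both arguments use the requirement that the diamond tips sweep out the achronal bounding slices $\Sigma^\pm$ to force the tip tangents $\gamma^{\pm}{}'$ to be non-timelike, observe that this forces $\gamma_L'$ and $\gamma_R'$ to share a signature (equivalently, mixed signature would make a tip timelike), use $\int_{\partial\Sigma_\lambda}\eta^2_\lambda>0$ to exclude the both-timelike branch, and obtain the in/out assignment from the same relation between corner and tip tangents. Your null-coordinate sign bookkeeping is just an explicit rendering of the paper's coordinate-free argument, and the obstacle you flag (rigorously placing the tips on $\Sigma^\pm$) is asserted rather than proven in the paper as well.
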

\begin{proof}
In addition to~$\gamma_L$ and~$\gamma_R$, also define the curves~$\gamma_\pm$.  In $(1+1)$ dimensions, the boundary~$\partial \Sigma_\lambda$ in the definition of coarse-graining families is just the two points~$\gamma_L(\lambda)$ and~$\gamma_R(\lambda)$ and the deviation vector field~$\eta^a$ consists of the union of~$\gamma'_L(\lambda)$ and~$\gamma'_R(\lambda)$.  Thus the requirement that~$\int_{\partial \Sigma_\lambda} \eta^2 > 0$ implies that at least one of~$\gamma'_L$,~$\gamma'_R$ must be spacelike.  But if one is spacelike and the other is timelike, then at least one of~$\gamma_\pm'$ must be timelike, violating the requirement that~$\gamma_\pm$ be Cauchy slices.  Thus neither can be timelike.  Moreover, if they ever both point into or out of~$R_\lambda$, it also must be the case that at least one of~$\gamma_\pm'$ is timelike.  Thus one must always be pointing towards~$R_\lambda$ and the other always pointing away.
\end{proof}
This result allows us to unambiguously differentiate ``left'' from ``right'': we will choose left and right so that~$\gamma_L'$ points into~$R_\lambda$ and~$\gamma_R'$ points out.  Moreover, it also ensures that any discretized version~$F^{(n)}$ of~$F$ looks as shown in Figure~\ref{fig:discreteSdiff}.  It is this geometric restriction that allows us to obtain the desired monotonicity property:

\begin{thm}
\label{thm:Sdiffmonotonic}
Let~$F^{(n)}$ and~$\widetilde{F}^{(n)}$ for~$i = 1, \ldots, n$ be discretized coarse-graining families such that (i)~$\widetilde{F}$ is IR coarser than~$F$ and (ii) the unions~$R_i \cap R_{i+1}$ and~$\widetilde{R}_i \cap \widetilde{R}_{i+1}$ are non-empty.  Then in any state of any QFT,
\be
\label{eq:Sdiffmonotonicity}
S_\mathrm{diff}^{(n)}[\widetilde{F}^{(n)}] \geq S_\mathrm{diff}^{(n)}[F^{(n)}].
\ee
\end{thm}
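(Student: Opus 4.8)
The plan is to derive \eqref{eq:Sdiffmonotonicity} as a telescoping sum of applications of SSA \eqref{eq:SSA}, one for each elementary deformation of a single endpoint of a single region. First I would use the preceding Proposition to fix a clean description of the geometry: each causal diamond $R_i$ is determined by its left and right corners $\gamma_L(\lambda_i)$ and $\gamma_R(\lambda_i)$, and since $\gamma_L$, $\gamma_R$ are nowhere timelike with $\gamma_L'$ pointing into and $\gamma_R'$ out of $R_\lambda$, both corners slide monotonically (rightward, say) as $i$ increases. Consequently the intersection $R_i \cap R_{i+1}$ is again a causal diamond, with left corner that of $R_{i+1}$ (the rightmost left corner) and right corner that of $R_i$ (the leftmost right corner); hypothesis (ii) guarantees these corners are spacelike separated, so the intersection is a genuine diamond with well-defined entropy.

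I would then interpolate from $F^{(n)}$ to $\widetilde F^{(n)}$ in two phases: first slide every left corner inward, $\gamma_L(\lambda_i)\to\widetilde\gamma_L(\lambda_i)$, holding all right corners fixed; then slide every right corner inward to reach $\widetilde F^{(n)}$. Each phase is a composition of elementary moves deforming a single corner of a single $R_k$. Consider sliding the left corner of $R_k$ inward; because $R_k$'s right corner does not move, this changes only the two terms $S(R_k)$ and $S(R_{k-1}\cap R_k)$ in \eqref{eq:discreteSdiff}. Let $A$, $B$, $C$ denote the three consecutive subintervals (ordered left to right) cut out on a common slice by the four relevant corners: the original and displaced left corners of $R_k$ and the right corners of $R_{k-1}$ and $R_k$. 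Then the deformed region is the diamond $BC$, the undeformed region is $ABC$, the undeformed intersection is $AB$, and the deformed intersection is $B$, so the change in differential entropy produced by this move is
\[
\Delta S_\mathrm{diff} = \bigl[S(BC) - S(ABC)\bigr] - \bigl[S(B) - S(AB)\bigr] = S(AB) + S(BC) - S(ABC) - S(B) \ge 0,
\]
which is exactly SSA \eqref{eq:SSA}. The right-corner moves of the second phase are identical after the obvious relabeling. Summing these manifestly nonnegative increments over all elementary moves in both phases yields \eqref{eq:Sdiffmonotonicity}.

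The step I expect to require the most care is the bookkeeping that makes every intermediate configuration admissible, since this is precisely what licenses each SSA application. Two facts must be verified at every stage. First, every consecutive intersection must stay nonempty: during the first phase the right corner of $R_{k-1}\cap R_k$ never moves while its left corner travels from $\gamma_L(\lambda_k)$ to $\widetilde\gamma_L(\lambda_k)$, and the chain $\widetilde R_k\subset R_k$ combined with hypothesis (ii) applied to $\widetilde F$ forces the displaced left corner to remain spacelike separated from, and to the left of, the fixed right corner of $R_{k-1}$ throughout (and symmetrically in the second phase). Second, the three regions $A$, $B$, $C$ in each move must share a common Cauchy slice so that SSA applies to them as subsystems of a single Hilbert space; this follows because the four corners involved are pairwise achronal — two lie on $\gamma_L$ and two on $\gamma_R$, both nowhere timelike by the Proposition, while the remaining (cross) pairs are spacelike by nonemptiness of the relevant diamonds — so they lie on a common achronal slice on which $A$, $B$, $C$ are disjoint spacelike intervals. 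With these admissibility facts established, the result is just the telescoping sum of SSA inequalities described above.
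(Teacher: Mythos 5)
Your overall strategy---interpolate from $F^{(n)}$ to $\widetilde F^{(n)}$ by pulling in endpoints and pay for each deformation with one application of SSA, via exactly the $A$, $B$, $C$ identification you wrote---is the same as the paper's, and that identification is correct. The genuine gap is in your decomposition into elementary moves. Your claim that sliding the left corner of a \emph{single} $R_k$ by its \emph{full} displacement ``changes only the two terms $S(R_k)$ and $S(R_{k-1}\cap R_k)$'' is false in general: the term $S(R_k \cap R_{k+1})$ also depends on $R_k$'s left corner, because the intersection's left corner is the rightmost of the two left corners, and nothing in the hypotheses prevents the displaced corner $\widetilde\gamma_L(\lambda_k)$ from lying to the right of the still-unmoved corner $\gamma_L(\lambda_{k+1})$. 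This is in fact generic when $\widetilde F$ is substantially coarser than $F$ (e.g.\ take the $R_i$ to be intervals of angular size $\pi$ centered at $2\pi i/n$ and the $\widetilde R_i$ to be tiny intervals with the same centers). When that crossing happens, the move changes three terms, the corner assignment of $R_k\cap R_{k+1}$ flips, and your single SSA application no longer bounds the increment: completing the bound along your route would additionally require $S$ to be monotonic under enlarging an interval, which fails for general (mixed or excited) states. Reordering the moves does not help---in a cyclic family the first region you process always has an unprocessed successor (and the mirror problem afflicts already-processed predecessors)---and individual increments can genuinely fail to be nonnegative even though the total change is. Note also that your admissibility check guards the wrong condition: you verify that $R_{k-1}\cap R_k$ stays nonempty, but what your two-term claim actually needs is that the corner \emph{orderings} defining every consecutive intersection are preserved at every intermediate stage.

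The paper's proof avoids this by making the elementary step a \emph{simultaneous} move: pull in the right corners of all $n$ regions at once (all left corners fixed), then by symmetry all left corners. The intermediate family $\widehat F^{(n)}$, whose diamonds have left corners $\gamma_L(\lambda_i)$ and right corners $\widetilde\gamma_R(\lambda_i)$, is itself admissible---its consecutive intersections are nonempty because $\gamma_L(\lambda_{i+1})$ lies to the left of $\widetilde\gamma_L(\lambda_{i+1})$, which by hypothesis (ii) applied to $\widetilde F$ lies to the left of $\widetilde\gamma_R(\lambda_i)$---and every intersection keeps the same corner assignment throughout, so SSA applies term by term exactly as in your $A$, $B$, $C$ computation. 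Alternatively, your one-corner-at-a-time scheme could be repaired by interleaving infinitesimal moves so that all corner orderings are maintained at every stage, but that preservation argument is precisely what is missing from your write-up.
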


\begin{proof}
We may think of obtaining the~$\widetilde{R}_i$ by ``pulling in'' the endpoints of each of the~$R_i$.  To show that~$S_\mathrm{diff}$ is non-decreasing under this ``pulling in'' process, it is sufficient to show that~$S_\mathrm{diff}$ is non-decreasing when only one endpoint is pulled in a small amount; symmetry guarantees that~$S_\mathrm{diff}$ is non-decreasing if the other endpoint is pulled in as well.  Consider therefore the family~$\widehat{F}^{(n)} = \{\widehat{R}_i\}$ obtained from the~$R_i$ by keeping the left endpoint unchanged but pulling the right endpoint in, as shown in Figure~\ref{fig:Sdiffproof}.

\begin{figure}[t]
\centering
\includegraphics[page=10]{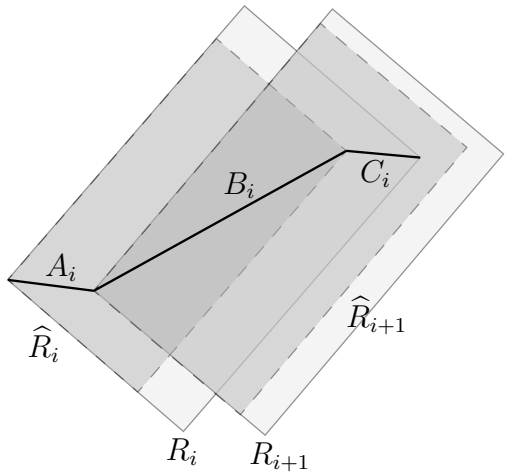}
\caption{The families of regions~$\{R_i\}$,~$\{\widehat{R}_i\}$ considered in the proof of Theorem~\ref{thm:Sdiffmonotonic}, along with the regions~$A_i$,~$B_i$,~$C_i$ to which strong subadditivity is applied.  Here each~$\widehat{R}_i$ is obtained from~$R_i$ by keeping the left endpoint fixed but pulling in the right endpoint a small amount.}
\label{fig:Sdiffproof}
\end{figure}

Now for each~$i$, we apply strong subadditivity to the regions~$A_i$,~$B_i$, and~$C_i$ defined as Cauchy slices of~$R_i \setminus R_{i+1}$,~$\widehat{R}_i \cap \widehat{R}_{i+1}$, and~$R_i \setminus \widehat{R}_i$, respectively\footnote{Note our abuse of notation: if~$A$ and~$B$ are causal diamonds such that there exists a Cauchy surface~$\Sigma$ with the property that~$A = D[A \cap \Sigma]$ and~$B = D[B \cap \Sigma]$, then here we use~$A \setminus B$ to refer to the causal diamond~$D[(A \setminus B) \cap \Sigma]$.}.  In terms of these regions, we have that~$R_i = D[A_i B_i C_i]$,~$R_i \cap R_{i+1} = D[B_i C_i]$,~$\widehat{R}_i = D[A_i B_i]$, and~$\widehat{R}_i \cap \widehat{R}_{i+1} = D[B_i]$.  Thus
\be
S^{(n)}_\mathrm{diff}[F^{(n)}] = \sum_{i = 1}^n \left[S(A_i B_i C_i) - S(B_i C_i)\right] \leq \sum_{i = 1}^n \left[S(A_i B_i) - S(B_i)\right] = S^{(n)}_\mathrm{diff}[\widehat{F}^{(n)}],
\ee
with the inequality following from strong subadditivity~\eqref{eq:SSA} applied to each term in the sum.  But clearly by repeating this process, we can continue to shrink the intermediate regions~$\widehat{R}_i$ to get all the way to~$\widetilde{F}^{(n)}$, with~$S^{(n)}_\mathrm{diff}$ increasing each time.  This proves the desired result.
\end{proof}

This result is quite remarkable, yet not unexpected: under IR-coarsening, we trace out over more and more subregions to the obtain the coarse-grained state~$\rho_F$.  As mentioned above, the irreversibility of removing more subregions is captured by SSA, and thus it is quite reasonable to expect that there should exist an object constructed from the entanglement entropies of the~$\rho_\lambda$ which behaves monotonically under IR-coarsening.  Indeed, SSA played a key role in the various entropic proofs of the~$c$,~$F$, and~$a$-theorems~\cite{CasHue04,CasHue06,CasHue12,CasHue15,CasTes17} -- which are statements of the irreversibility of coarse-graining under Wilsonian RG.  In fact, in the special case of a Poincar\'e invariant vacuum state, the discretized differential entropy~\eqref{eq:discreteSdiff} is just a sum over Casini-Huerta~$c$-functions~$S(R_i) - S(R_i \cap R_{i+1})$, so the monotonicity of~$S_\mathrm{diff}$ really does arise in the exact same way as the entropic~$c$-theorem (though here we consider arbitrary states).  We therefore interpret the monotonicity of differential entropy as confirmation that our coarse-graining procedure does what it was designed to do.

For the holographic analysis in the following section, it will be useful to note that the differential entropy is well-defined in the continuum limit~$n \to \infty$.  In this case, it takes a very simple form if we interpret the entropy~$S(R_\lambda)$ of a region~$R_\lambda$ as a function of its left and right endpoints:~$S(R_\lambda) = S(\gamma_L(\lambda),\gamma_R(\lambda))$.  With this interpretation, we define the (continuum) differential entropy of the family~$\{R_\lambda\}$ as~\cite{BalChoCze,MyeHea14}:
\begin{defn}
In any~$(1+1)$-dimensional QFT on a spacetime with compact spatial slices, let~$F = \{R_\lambda\}$ be a coarse-graining family with left and right endpoints~$\gamma_L(\lambda)$, $\gamma_R(\lambda)$.  Then the differential entropy of~$\{R_\lambda\}$, obtained in the~$n \to \infty$ limit of the discretized differential entropy~\eqref{eq:discreteSdiff}, is
\be
\label{eq:Sdiff}
S_\mathrm{diff}[F] \equiv \oint d\lambda \, \left.\frac{\partial S(\gamma_L(\lambda),\gamma_R(\lambda'))}{\partial \lambda'}\right|_{\lambda' = \lambda} = - \oint d\lambda \, \left.\frac{\partial S(\gamma_L(\lambda'),\gamma_R(\lambda))}{\partial \lambda'}\right|_{\lambda' = \lambda};
\ee
the two expressions are equal as can be seen via integration by parts.
\end{defn}

This continuum expression conveniently makes clear that differential entropy is not a positive quantity: for instance, since for pure states~$S[R_\lambda] = S[\, ^s \overline{R}_\lambda]$, we have~$S_\mathrm{diff}[F] = -S_\mathrm{diff}[\{\, ^s\overline{R}_\lambda\}]$.  However, it is also clear that if the regions~$R_\lambda$ are sufficiently small relative to any other scales,~$S(\gamma_L(\lambda),\gamma_R(\lambda))$ will behave similarly to how it does in the vacuum, and thus it will increase as~$\gamma_L(\lambda)$ and~$\gamma_R(\lambda)$ are moved apart; this implies that for sufficiently small~$R_\lambda$,~$S_\mathrm{diff}$ is positive.  We may interpret this result result heuristically as follows: starting with small~$R_\lambda$, our monotonicity result implies that~$S_\mathrm{diff}$ decreases as the size of the~$R_\lambda$ increases.  Eventually,~$S_\mathrm{diff}$ may vanish and become negative; this is an indication that the regions~$R_\lambda$ have become large enough that no IR data is lost in the coarse-graining to~$\rho_F$ (note that since we are assuming compact spatial slices, the volume of these slices imposes a natural IR cutoff).  Further increasing the size of the~$R_\lambda$ morally does not recover any new information.  This heuristic interpretation of negative differential entropy is pleasantly consistent with that of~\cite{CzeHay14}, in which a negative differential entropy corresponds to a distillation (rather than consumption) of entanglement in the constrained merging protocol.

\subsection{An Abundance of Area Laws}

Although Theorem~\ref{thm:Sdiffmonotonic} makes no reference to holography, a remarkable consequence of it is that in a holographic setting, the bulk dual to it is an area law!  In fact, Theorem~\ref{thm:Sdiffmonotonic} immediately gives rise to an \textit{infinite} family of area laws in the bulk, consistent with the expectation laid out in Section~\ref{sec:intro}.  This observation follows from the hole-ographic interpretation of differential entropy:~$S_\mathrm{diff}$ computes the area (or more generally, a local geometric functional) of a particular curve (or in general, curves) in the three-dimensional bulk dual.  The full details of this connection can be found in~\cite{MyeHea14}; here we offer a brief review of the salient features for convenience to the reader.

Assume the existence of an asymptotically AdS dual to the two-dimensional field theory state~$\rho$, and consider a one-parameter family~$\{\Gamma_\lambda\}$ of geodesics in the AdS spacetime such that each~$\Gamma_\lambda$ is anchored at the AdS boundary at the points~$\gamma_L(\lambda)$,~$\gamma_R(\lambda)$.  The family~$\{\Gamma_\lambda\}$ defines a deviation vector~$h^a_\lambda$ on each~$\Gamma_\lambda$; we take~$h^a_\lambda$ to be normal to~$\Gamma_\lambda$ (concretely,~$h^a_\lambda = {P^a}_b(\partial_\lambda)^b$, where~${P_a}^b$ is the orthogonal projector to~$\Gamma_\lambda$).  As discussed in detail in~\cite{MyeHea14}, for the families of boundary regions we consider (where~$\gamma'_L(\lambda)$ points into~$R_\lambda$ and~$\gamma'_R(\lambda)$ points out of~$R_\lambda$), we are guaranteed that~$h^2_\lambda$ will vanish somewhere on~$\Gamma_\lambda$, corresponding to~$h^a_\lambda$ becoming null or, non-generically, vanishing entirely.  As illustrated in Figure~\ref{fig:sigmaB}, now consider a bulk curve~$\sigma_B$ defined by taking the union of such points on~$\Gamma_\lambda$; that is, letting~$s$ be a parameter along each~$\Gamma_\lambda$, define~$s^*(\lambda)$ such that~$h^2_\lambda(s^*(\lambda)) = 0$, and then define~$\sigma_B(\lambda) = \Gamma_\lambda(s^*(\lambda))$.  Geometrically, this construction ensures that~$\sigma_B$ intersects each~$\Gamma_\lambda$, and where it does the tangent vectors to~$\sigma_B$ and to~$\Gamma_\lambda$ span a null plane.  (In general there may be more than one choice of~$s^*(\lambda)$, and thus more than one such~$\sigma_B(\lambda)$ may be defined; what follows is true for any particular choice as long as~$\sigma_B(\lambda)$ is connected.)

\begin{figure}[t]
\centering
\includegraphics[page=11]{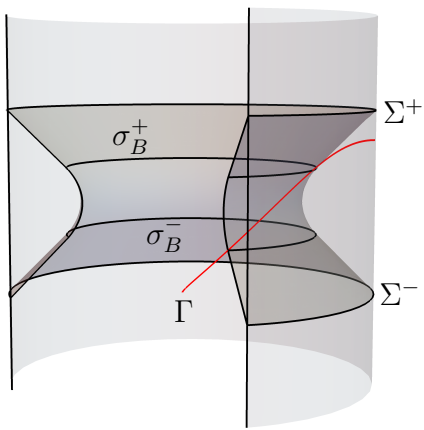}
\caption{An illustration of the construction of the bulk curves~$\sigma_B^\pm$ from the boundary regions~$R_\lambda$ and corresponding family of bulk curves~$\Gamma_\lambda$.  Here we show a single member~$\Gamma$ of the family~$\{\Gamma_\lambda\}$; the two bulk curves labeled~$\sigma_B^\pm$ have the property that where they intersect~$\Gamma$, the tangent vectors to~$\Gamma$ and to~$\sigma_B^\pm$ span a null plane.}
\label{fig:sigmaB}
\end{figure}

The main result of~\cite{MyeHea14} is then that in the regime where the lengths of the~$\Gamma_\lambda$ compute the boundary entanglement entropies~$S(\gamma_L(\lambda),\gamma_R(\lambda))$ via the HRT formula, as long as~$\sigma_B$ is differentiable and everywhere spacelike,
\be
\label{eq:holeography}
|S_\mathrm{diff}[R_\lambda]| = \frac{\Len(\sigma_B)}{4G_N \hbar}.
\ee
(The absolute value is required because as noted above~$S_\mathrm{diff}$ need not be positive.)  In other words, given a family~$\{R_\lambda\}$, we can define a bulk curve~$\sigma_B$; if~$\sigma_B$ is everywhere smooth and spacelike, its length is computed from the differential entropy of~$\{R_\lambda\}$.

More generally,~$\sigma_B$ may become null somewhere or have cusps; in such cases,~$S_\mathrm{diff}[R_\lambda]$ in fact computes a \textit{signed} length of~$\sigma_B$, where portions of different sign are joined wherever~$\sigma_B$ fails to be smooth and spacelike.  Indeed, this sign ambiguity is the need for the absolute value in~\eqref{eq:holeography}: even when~$\sigma_B$ is everywhere spacelike and differentiable,~$S_\mathrm{diff}[R_\lambda]$ could compute the \textit{negative} of its length.  Such signed lengths may be physically interesting (for instance, they contribute to the change in sign of~$S_\mathrm{diff}$, which we argued heuristically above can potentially be understood as an indication that no IR data is being lost), but for simplicity we will hereafter restrict to the case where~$\sigma_B$ is everywhere spacelike and differentiable.

We now immediately obtain an infinite class of area laws.  Consider \textit{any} coarse-graining family~$F$ such that~$S_\mathrm{diff}[F] > 0$, and introduce an IR-coarser family~$\widetilde{F}$.  If the state has a classical geometric dual and the curves~$\sigma_B$,~$\widetilde{\sigma}_B$ constructed from~$F$ and~$\widetilde{F}$ are everywhere differentiable\footnote{We will relax the requirement that~$\sigma_B$ and~$\widetilde{\sigma}_B$ be everywhere differentiable in Section~\ref{sec:generald}; here we need this requirement to develop our precise entropic interpretation.} and spacelike, then monotonicity of the differential entropy~\eqref{eq:Sdiffmonotonicity} combined with the hole-ographic formula~\eqref{eq:holeography} implies that
\be
\Len[\widetilde{\sigma}_B] \geq \Len[\sigma_B].
\ee
(On the other hand, if~$S_\mathrm{diff}[F] \le 0$, then we obtain~$\Len[\widetilde{\sigma}_B] \leq \Len[\sigma_B]$ as long as~$S_\mathrm{diff}[\widetilde{F}] \leq 0$ as well.)  Recall that the inequality is simply strong subadditivity: \textit{the removal of long-range entanglement in the boundary maps precisely to an area law in the bulk!}

In fact, this construction can be slightly generalized to higher dimensions.  Assume that instead of being geodesics, the curves~$\Gamma_\lambda$ extremize a geometric action
\be
\label{eq:Lbulk}
L_\lambda[\gamma] \equiv \int_{s_L}^{s_R} ds \, \Lcal(\gamma(s), \gamma'(s)) \mbox{ with } \gamma(s_{L,R}) = \gamma_{L,R}(\lambda),
\ee
such that~$L_\lambda[\Gamma_\lambda] = S(\gamma_L(\lambda),\gamma_R(\lambda))$,~$\Lcal(\gamma(s), \gamma'(s))$ is positive and depends only on~$\gamma$ and its tangent vector~$\gamma'$, and~$L_\lambda$ is invariant under reparametrizations of~$s$.  The more general result of~\cite{MyeHea14} is that
\be
|S_\mathrm{diff}[R_\lambda]| = \oint d\lambda \, \Lcal(\sigma_B(\lambda),\sigma_B'(\lambda)),
\ee
where~$\sigma_B$ is a bulk curve constructed from~$\{\Gamma_\lambda\}$ in the same way as above.  Now, in higher-dimensional geometries, if the family~$\{\Gamma_\lambda\}$ obeys a symmetry property dubbed generalized planar symmetry in~\cite{MyeHea14}, then the problem of computing codimension-two extremal surfaces essentially reduces to solving for curves that extremize an action of the form~\eqref{eq:Lbulk}.  In these (very restricted) higher-dimensional setups, differential entropy then computes the area of a codimension-two extremal surface in the bulk, and we again obtain an area law.  However, because the requirement of generalized planar symmetry is so strong, we will continue to restrict only to the case of two boundary dimensions for the remainder of this section.

As a final note, it is worth remarking once again that since the physical role of~$S_\mathrm{diff}[R_\lambda]$ is not well-understood, we do not purport to give a physical interpretation to the curves~$\sigma_B$ or their area.  Rather, we have derived an entropic understanding of the \textit{monotonicity} in the area of a family of curves.  To illustrate this construction, we now turn to some explicit examples that explain old and novel area laws in AdS$_3$.

\subsection{Spacelike Area Laws in Pure AdS}

We will make use of global coordinates~$(t,r,\phi)$ in terms of which the metric of pure AdS$_3$ is
\be
\label{eq:AdS3}
ds^2 = -\left(1+\frac{r^2}{\ell^2}\right)dt^2 + \frac{dr^2}{1+r^2/\ell^2} + r^2 d\phi^2.
\ee
It will sometimes be useful to convert to a compactified coordinate $r_* = \ell \arctan(r/\ell)$, in terms of which the metric becomes
\be
ds^2 = \sec^2(r_*/\ell)\left(-dt^2 + dr_*^2\right) + \ell^2 \tan^2(r_*/\ell) \, d\phi^2.
\ee
Clearly, null radial geodesics are just given by lines of constant~$t \pm r_*$.

First, consider working on a static time slice~$t = $~const.  This is just the Poincar\'e disk, on which the construction of the bulk curves~$\sigma_B$ has been studied extensively; see for instance~\cite{CzeLam,CzeLamMcC15a}.  Spacelike geodesics on this slice are given by
\be
\label{eq:staticspacelikegeodesics}
\tan^2 (\phi - \phi_0) = \frac{r^2 \tan^2 (\Delta \phi/2) - \ell^2}{r^2 + \ell^2},
\ee
where the endpoints of the geodesics lie at~$\phi = \phi_0 \pm \Delta \phi/2$ on the boundary.  Thus the minimum~$r$ reached by a geodesic whose endpoints have angular separation~$\Delta \phi$ is
\be
\label{eq:rmin}
r_\mathrm{min}(\Delta \phi) = \ell |\cot(\Delta \phi/2)|, \qquad (r_*)_\mathrm{min}(\Delta \phi) = \frac{\ell}{2}\left|\pi - \Delta \phi\right|.
\ee

Therefore, consider first a set of boundary intervals all of the same angular extent~$\Delta \phi \leq \pi$; the bulk curve~$\sigma_B$ defined by these intervals is just a circle of radius~$r_\mathrm{min}(\Delta \phi)$, and the differential entropy just computes the circumference of this circle~$2\pi r_\mathrm{min}(\Delta)$.  Now, as~$\Delta \phi$ is decreased, so that the boundary intervals all become smaller, the circumference of the corresponding bulk circle clearly increases: this gives a spacelike area law.  On the other hand, it is worth noting that for~$\Delta \phi > \pi$, the differential entropy of the intervals of size~$\Delta \phi$ computes the \textit{negative} circumference of the bulk circle; in this case, decreasing~$\Delta \phi$ initially decreases the circumference of the bulk circle, which is consistent with monotonicity of the differential entropy: the \textit{negative} circumference still increases.

More generally, given any closed differentiable bulk curve~$\sigma_B$ with no self-intersections on the Poincar\'e disk, a family of spatial boundary intervals~$I_\lambda$ whose differential entropy computes the length of~$\sigma_B$ can be found by just firing tangent geodesics off of~$\sigma_B$, as shown in Figure~\ref{fig:spacelikegeneral}.  However, only if~$\sigma_B$ is convex are the causal diamonds~$R_\lambda = D[I_\lambda]$ of the resulting boundary intervals a coarse-graining family; this means that only if~$\sigma_B$ is convex are we guaranteed by SSA that it obeys an area law.  The reason for this matches beautifully with geometric intuition: if the boundary intervals are shrunk, then~$\sigma_B$ moves towards its exterior.  Now, if~$\sigma_B$ is convex, its outwards-directed expansion is non-negative, and thus any deformation of it towards its exterior cannot decrease its area.  On the other hand, if~$\sigma_B$ is concave, it must have at least some portions where its outwards-directed expansion is negative: deforming just these regions outwards would \textit{decrease} its area, so it cannot obey a general area law.  Thus we see that the definition of a coarse-graining family automatically excludes concave curves, which would violate a potential area law.

\begin{figure}[t]
\centering
\includegraphics[page=12]{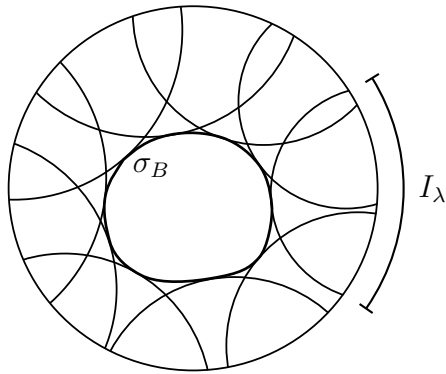}
\caption{From a general bulk curve~$\sigma_B$, we can always find a family~$\{I_\lambda\}$ of boundary intervals whose differential entropy computes the length of~$\sigma_B$.  If~$\sigma_B$ is convex, these intervals define a coarse-graining family and thus yield a monotonicity property of differential entropy; this corresponds to the fact that only if~$\sigma_B$ is convex are we guaranteed that any ``outwards'' deformation of~$\sigma_B$ will increase its length.}
\label{fig:spacelikegeneral}
\end{figure}

Finally, let us note that although here we focused on the Poincar\'e disk, we could of course consider slightly modifying all the intervals~$I_\lambda$ so that they don't all lie on the same time slice.  As long as the resulting causal diamonds~$R_\lambda$ still constitute a coarse-graining family, it is clear that generically the corresponding bulk curve~$\sigma_B$ will no longer lie on some slice of time symmetry.  Nevertheless, as long as the modifications to the~$I_\lambda$ are sufficiently small, the resulting area laws obtained by shrinking the causal diamonds will still be spacelike.  If the intervals are deformed sufficiently far from all lying on a constant time slice, the signature of the area law may change; this leads us to the next section.

\subsection{Null Area Laws in Pure AdS}

Consider again a family of intervals~$\{I_\lambda\}$ of angular size~$\Delta \phi = \pi$ on the slice~$t = 0$.  The corresponding bulk geodesics pass through the bulk point~$(t,r) = (0,0)$, and thus the bulk ``curve''~$\sigma_B$ degenerates to a point; the length of~$\sigma_B$ vanishes.  Next, consider the intervals defined by the intersection of the causal diamonds~$D[I_\lambda]$ with slices of constant time~$t > 0$; these define new spatial intervals of size~$\Delta \phi(t) = \pi - 2t/\ell$.  From~\eqref{eq:rmin}, the corresponding bulk curves~$\sigma_B$ are circles at~$r_*(t) = t$.  But since outgoing radial bulk null geodesics correspond to lines of constant~$t - r_*$, these circles correspond to constant-$t$ slices of the future lightcone of the point~$(t,r) = (0,0)$: in other words, the family of intervals of size~$\Delta \phi(t)$ generate bulk curves~$\sigma_B(t)$ which trace out a lightcone, as shown in Figure~\ref{subfig:nullcircle}.

We may consider more general slices of this light cone as follows.  This light cone is generated by radial geodesics fired from the point~$(t,r) = (0,0)$; since they are radial, these generators can be labeled by~$\phi$.  Any (spatial) slice~$\gamma$ of the light cone intersects each of these generators only once, and thus we may parametrize any such slice by the time~$t(\phi)$ at which the generator at angle~$\phi$ intersects~$\gamma$. Now consider a family of intervals parametrized by~$\phi$, with each interval centered at~$\phi$ lying on the time slice~$t(\phi)$ and with angular size~$\Delta \phi(\phi) = \pi - 2t(\phi)/\ell$.  By construction, each geodesic anchored to these intervals will intersect the light cone precisely on the slice~$\gamma$, and thus at this point, the tangent vector to~$\gamma$ and to the boundary-anchored geodesic span a null plane.  By the hole-ographic construction, this ensures that the bulk curve~$\sigma_B$ constructed from these boundary intervals will correspond precisely to~$\gamma$.  Thus differential entropy can be used to compute the area of \textit{any} slice of the light cone, not just symmetric ones; see Figure~\ref{subfig:nullgeneral}.

\begin{figure}[t]
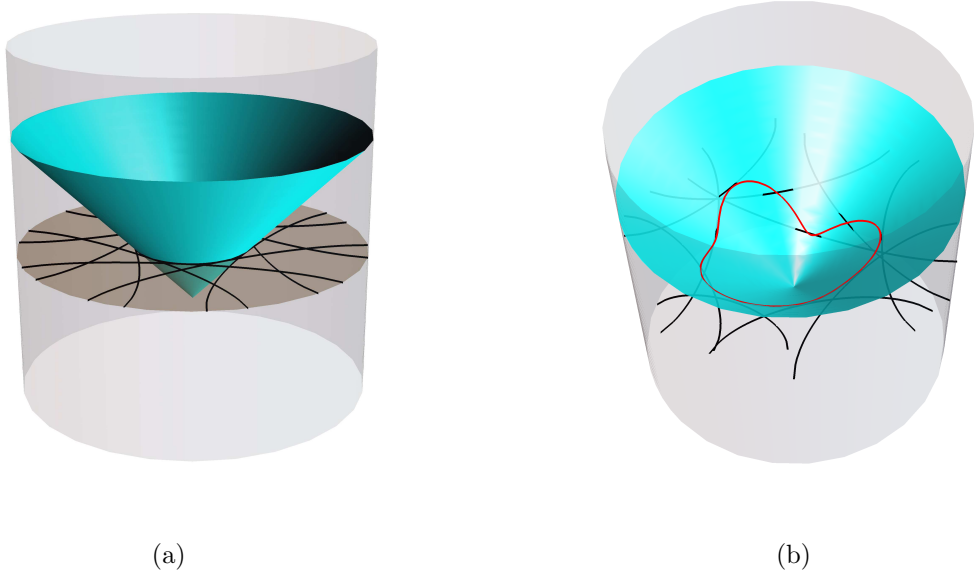

\centering
\subfigure[]{
\includegraphics[page=13]{Figures-pics}
\label{subfig:nullcircle}
}
\hspace{1cm}
\subfigure[]{
\includegraphics[page=14]{Figures-pics}
\label{subfig:nullgeneral}
}
\caption{\subref{subfig:nullcircle}: circular cross sections of the light cone of a point in AdS can be obtained as the curves~$\sigma_B$ generated from intervals of angular extent~$\Delta \phi(t) = \pi -2t/\ell$ on the time slice at time~$t$.  \subref{subfig:nullgeneral}: by allowing the boundary intervals to lie on different time slices relative to each other, we can obtain any cross-section of the light cone.  In both cases, monotonicity of differential entropy applies, giving an entropic origin of the area law for this light cone.}
\label{fig:nullAdS}
\end{figure}

As the slice~$\gamma$ is moved to the future, the corresponding boundary intervals shrink into themselves, and thus their differential entropy must be nondecreasing.  Thus the monotonicty of the area of any slice of the light cone corresponds directly to the monotonicity of differential entropy.  Moreover, note that this light cone also happens to be a causal horizon, and therefore we have an explanation for the Hawking area law along a causal horizon.  This may be a simple case, but to our knowledge it is the first entropic understanding of the Hawking area law for non-stationary causal horizons.

\subsection{Mixed-Signature Area Laws in Pure AdS}

To obtain the spacelike and null area laws above, we modified the boundary intervals generating the bulk curves~$\sigma_B$ by changing their size and by translating them in time.  Here we explore the final degree of freedom -- boosts -- and show that we can obtain area laws for mixed-signature surfaces: that is, surfaces with timelike, null, and spacelike components.

To that end, let us introduce boundary null coordinates~$u = t/\ell + \phi$,~$v = t/\ell - \phi$, and consider boundary causal diamonds defined by left and right endpoints with null separations~$\Delta u$,~$\Delta v$.  In order to ensure that these points are spacelike separated, we require~$0 < \Delta u < 2\pi$ and~$0 > \Delta v > -2\pi$.    Consider now a family of such causal diamonds with~$\Delta u$ and~$\Delta v$ fixed, but centered on the time slice~$t = 0$, as shown in Figure~\ref{fig:boostedintervals}.  In fact, since the differential entropy of regions with~$\Delta \phi > \pi$ is just the negative of that of regions with~$\Delta \phi \to 2\pi - \Delta \phi$, we may simply restrict to~$\Delta \phi \leq \pi$, implying~$\Delta u \leq 2\pi - |\Delta v|$.  Now by keeping~$\Delta v$ fixed while varying~$\Delta u$ (or vice versa), we may stretch or contract these intervals in one null direction while leaving their extent in the other null direction unchanged: it is this deformation that we will exploit to construct a mixed-signature area law.

\begin{figure}[t]
\centering
\includegraphics[page=15]{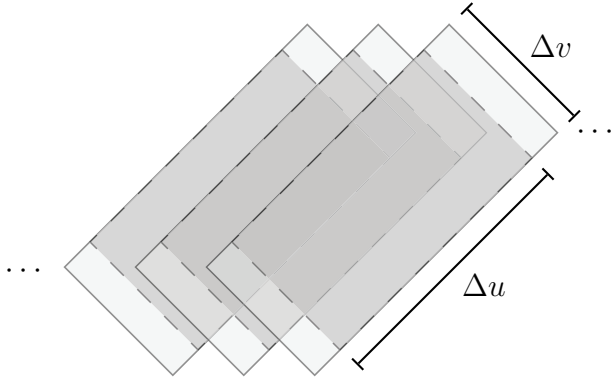}
\caption{The boosted intervals used in the construction of mixed-signature area laws.  For fixed~$\Delta v$, decreasing~$\Delta u$ shrinks the intervals in the~$u$ null direction while leaving their extent in the~$v$ null direction unchanged.}
\label{fig:boostedintervals}
\end{figure}

We leave details of the construction to Appendix~\ref{app:mixedarealaw}; the punchline is that such a family of intervals generates two curves~$\sigma_B^\pm$ lying at
\bea
t_{\sigma_B^\pm} &= \pm \ell \arctan\left|\cot\left(\frac{\Delta u - \Delta v}{4}\right)\tan^2\left(\frac{\Delta u + \Delta v}{4}\right)\right|, \label{subeq:timelikearealawt} \\
r_{\sigma_B^\pm} &= \frac{\ell}{2}\left|\cot\left(\frac{\Delta u}{2}\right) - \cot\left(\frac{\Delta v}{2}\right)\right|.\label{subeq:timelikearealawr}
\eea
When~$\Delta u = 2\pi - |\Delta v|$, it is clear from the above that~$\sigma_B^\pm$ degenerate to a point at~$r_{\sigma_B} = 0$, and thus have zero length.  It is also clear that as~$\Delta u$ is decreased (with~$\Delta v$ fixed),~$\sigma_B^\pm$ move monotonically to increasing~$r$, and therefore have monotonically increasing areas: we obtain area laws, as required by positivity of differential entropy.

What is more interesting is the signature of these area laws.  Indeed, consider the surfaces~$H^\pm$ swept out by~$\sigma_B^\pm$ as~$\Delta u$ is varied between~$0$ and~$2\pi - |\Delta v|$; these surfaces are shown in Figure~\ref{fig:timelikeAdS}.  As can be seen in the plot and as we discuss in more detail in Appendix~\ref{app:mixedarealaw},~$H^\pm$ are always spacelike at small and large~$r$; however, if~$\Delta v > -2\arcsin(\sqrt{2}/3) \approx -0.31 \pi$, then~$H^\pm$ will also be \textit{timelike} at some intermediate~$r$.  Thus for sufficiently small~$|\Delta v|$, we have constructed area laws for surfaces of \textit{mixed signature}; this is the first entropic explanation for such area laws.

\begin{figure}[t]
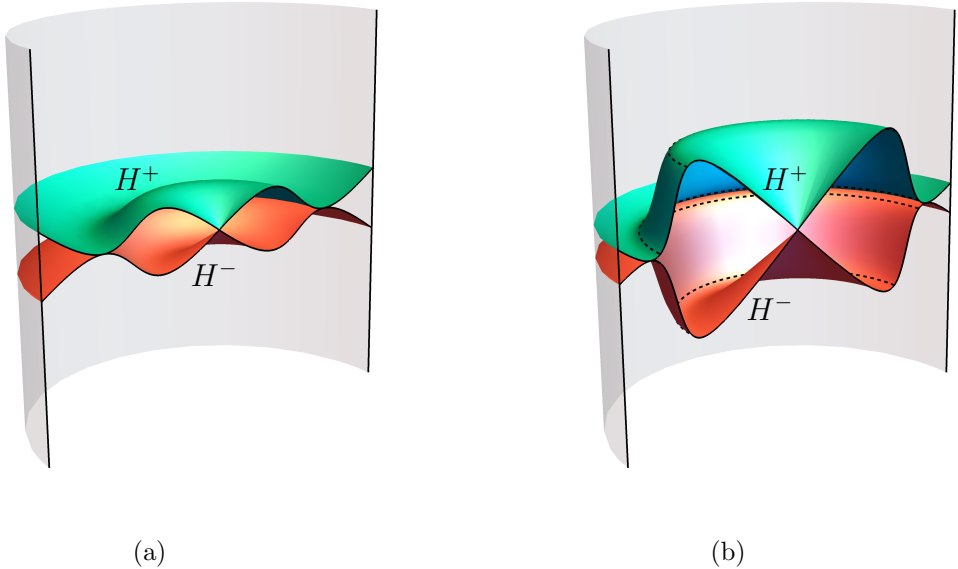

\centering
\subfigure[]{
\includegraphics[page=16]{Figures-pics}
\label{subfig:spacelikelaw}
}
\hspace{1cm}
\subfigure[]{
\includegraphics[page=17]{Figures-pics}
\label{subfig:mixedlaw}
}
\caption{The surfaces~$H^\pm$ swept out by~$\sigma_B^\pm$ as~$\Delta u$ is varied with~$\Delta v$ held fixed.  In~\subref{subfig:spacelikelaw},~$\Delta v = -\pi/3$, and~$H^\pm$ are everywhere spacelike.  In~\subref{subfig:mixedlaw},~$\Delta v = -\pi/6$, and~$H^\pm$ are timelike between the surfaces shown as dotted lines.}
\label{fig:timelikeAdS}
\end{figure}

It is tempting to compare this area law to the area law for holographic screens~\cite{BouEng15a, BouEng15b}, which also have mixed signature.  In fact, future holographic screens may have spacelike and timelike components, but on the timelike components the area must increase towards the past: this behavior is reproduced by the surface~$H^+$ (the surface~$H^-$ behaves like past holographic screens, in which the area of timelike portions increases to the future).  Indeed, this behavior is general: for a general family of intervals in any geometry, shrinking the size of the intervals must result in a smaller (that is, nested) entanglement wedge.  Since the differential entropy curves~$\sigma_B$ must be tangent to the entanglement wedges of the boundary intervals that generate them, the ``futuremost'' differential entropy curve~$\sigma_B^+$ can only move in spacelike or past-timelike directions when the boundary intervals are shrunk.  (In non-generic cases like the null area law in pure AdS, it may also move in the future-directed null direction, but even there it may never move in a future-directed timelike direction.)  This connection between the behavior of area laws constructed from differential entropy and those of holographic screens is tantalizing, and is worth exploring further.

Finally, note that although we derived this area law in pure AdS, it is robust under sufficiently small perturbations of either the boundary intervals or of the bulk geometry, since sufficiently small perturbations cannot change the signature of a non-null hypersurface.

\section{Implementation in Higher Dimensions}
\label{sec:generald}

Our analysis and interpretation in the previous section relied heavily on differential entropy. The definition of differential entropy and its relation to the area of bulk surfaces do not have known generalizations to more than three bulk dimensions (except in the special case of generalized planar symmetry~\cite{MyeHea14}). Nonetheless, as the coarse-graining procedure itself is well-defined in any dimension (and indeed, in any field theory) we can still make progress in higher dimensions.

Subregion/subregion duality and in particular the reconstructibility of the entanglement wedge $W_{E}[R]$ from the reduced density matrix $\rho_{R}$ immediately imply that any coarse-graining family $F$ in a holographic CFT (with a semiclassical bulk dual) gives rise to natural bulk geometric constructs associated to the information recovery limit of the coarse-grained state.  Such regions were alluded to at the end of Section~\ref{sec:coarsegraining} and we now treat them in more detail.  First, let us define two regions of interest:
\begin{defn}
Let~$F = \{R_\lambda\}$ be a coarse-graining family.  In any state $\rho$ with a semiclassical gravitational dual, we define the \textit{reconstruction region}~$L[F] \equiv \cup_\lambda W_E[R_\lambda]$ as well as the unreconstructible region\footnote{The region~$U[F]$ was recently considered in~\cite{NomRat18}; below we will comment on the connection of the area laws of~\cite{NomRat18} to ours.}~$U[F] \equiv \cap_\lambda W_E[\, ^s \overline{R}_\lambda]$.
\end{defn}
\noindent By entanglement wedge reconstruction, the reconstruction region~$L[F]$ consists precisely of all points in the bulk at which local bulk operators are known to be recoverable from the reduced states~$\rho_{\lambda}$ constituting $\rho_{F}$.  While the nomenclature may suggest otherwise, not \textit{all} operators in $L[F]$ are recoverable: sufficiently smeared operators will require access to reduced states over unions of the $R_{\lambda}$, which are not data accessible to $\rho_{F}$\footnote{As an example, consider a bilocal operator $\Ocal(x_{1},x_{2})$ evaluated any two points that do not live in a single $W_{E}[R_{\lambda}]$ for some $\lambda$.}.  Conversely, the unreconstructible region~$U[F]$ (which in pure states can be written as~$U[F] = \cap_\lambda \, ^s \overline{W_E[R_\lambda]}$) consists of all points in the bulk at which no local operators can be recovered from~$\rho_F$.

A natural expectation from the holographic nature of gravity is that there exists a special surface (of some codimension), defined geometrically from the boundaries of these regions, which characterizes the amount of information lost in the coarse-graining.  Intuition from the differential entropy in three bulk dimensions as well as from general gravitational thermodynamics suggests that such a surface should be codimension-two, and that its area is a measure of the data rendered inaccessible by our coarse-graining.

In general, how might such a surface be obtained?  Since~$U[F]$ is a domain of dependence, one option is the ``rim'' of~$U[F]$.  Another option, motivated by the construction of the curves~$\sigma_B$ in three bulk dimensions, is to construct a surface (in fact, multiple surfaces) as the locus of points where~$\partial L[F]$ changes signature from null to timelike, or alternatively as the locus of points where the generators of~$\partial J^\pm[L[F]]$ leave~$\partial L[F]$.  While the ``rim'' of~$U[F]$ is more natural, the latter option is clearly the correct one to consider in the case of the differential entropy.  Finally, yet another natural surface is provided by the intersection~$\partial J^+[L[F]] \cap \partial J^-[L[F]]$, which is the ``rim'' of the region~$^s \overline{F[L]}$ spacelike separated from~$L[F]$.  We sketch all these surfaces in Figure~\ref{subfig:generalunion}.
\begin{figure}[t]
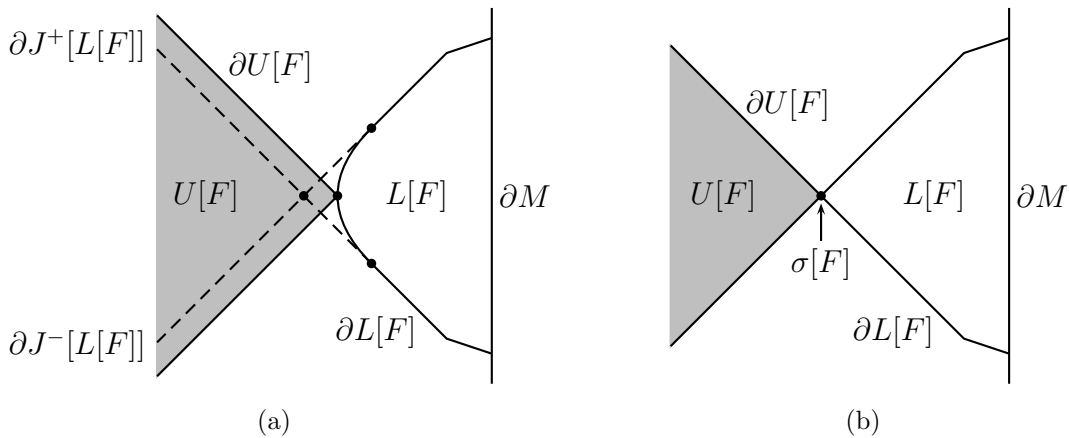

\centering
\subfigure[]{
\includegraphics[page=18]{Figures-pics}
\label{subfig:generalunion}
}
\hspace{1cm}
\subfigure[]{
\includegraphics[page=19]{Figures-pics}
\label{subfig:splitting}
}
\caption{In general,~$L[F]$ and~$U[F]$ look as shown in \subref{subfig:generalunion};~$U[F]$ is globally hyperbolic, but the boundary of~$L[F]$ may have timelike portions (this figure can be thought of as a ``constant-angle'' slice through Figure~\ref{fig:wedgeunionintersections}).  From these regions, we may define the assorted surfaces mentioned in the text, shown here as dots.  For an appropriate choice of~$\{R_\lambda\}$,~$U[F]$ and~$L[F]$ ``touch'' along a codimension-two surface~$\sigma$, as shown in~\subref{subfig:splitting}; in this case, all the surfaces discussed in the text coincide with~$\sigma$.}
\label{fig:splitting}
\end{figure}
With a plethora of options to choose from, it is not immediately clear which surface encodes the missing data in general dimension.  It is also certainly possible that more than one surface is relevant\footnote{Indeed, it's not too difficult to see that even in three bulk dimensions, in many cases the surface~$\partial J^+[L[F]] \cap \partial J^-[L[F]]$ is not irrelevant, as it is constructed by the intersection of null congruences fired from the differential entropy surfaces~$\sigma_B^\pm$.  These null congruences have negative expansion towards~$\partial J^+[L[F]] \cap \partial J^-[L[F]]$, and thus its area is bounded above by the differential entropy.}.  To develop a complete understanding, we would need a monotonic field-theoretic object constructed from our coarse-graining family $F$ in any dimension; in Section~\ref{sec:disc} we will comment more on possible approaches towards finding such a construct (these are clearly related to the $a$, $c$, and $F$-theorems).  For now, to make some progress, we will take a conservative approach: we focus on the case where all these candidates are coincident on a single surface~$\sigma$, defined as
\begin{defn}
\label{def:edge}
If the intersection
\be
\sigma[F] \equiv \partial L[F] \cap \partial \, ^s \overline{L[F]}
\ee
is a nonempty, compact, codimension-two spacelike surface, then we call $\sigma[F]$ the \textit{reconstruction edge} of $L[F]$ (the ``edge of $L[F]$'' for short).  In such a case, we also have~$\sigma[F] = \partial J^+[L[F]] \cap \partial J^-[L[F]] = \partial J^+[U[F]] \cap \partial J^-[U[F]] = \partial L[F] \cap \partial U[F]$\footnote{In this degenerate case, $L[F]$ is a subset of the bulk-to-boundary definition of the ``outer wedge'' of $\sigma[F]$, and $^{s}\overline{L[F]}$ coincides with the ``inner wedge'' of $\sigma[F]$~\cite{EngWal17b, EngWalTA}.}. 
\end{defn}

\noindent Intuitively, an edge separates the degrees of freedom removed by our IR coarse-graining procedure from those removed by Wilsonian-like holographic RG; this is the reason that, as noted in Section \ref{sec:intro}, the area laws presented below and those of \cite{NomRat18} occasionally coincide (this does not apply to the more general area laws of Section~\ref{sec:SSA}).  However, due to the differences between the two constructions, it is not clear in general when exactly the two classes of area laws -- the ones presented in this section and those of \cite{NomRat18} -- agree.

Let us now prove our area laws, corresponding to a monotonicity along continuous IR coarse-grainings.  To do so, we first prove some preliminary properties about $\sigma[F]$ itself.
\begin{lem}
\label{lem:edge}
Let~$\sigma[F]$ be the edge of~$L[F]$.  Then~(i)~$\sigma[F]$ is~$C^1$ on all but a sparse set, and~(ii)~every point on~$\sigma[F]$ lies on at least one of the~$X[R_\lambda]$, and where~$\sigma[F]$ is~$C^1$ it is tangent to any~$X[R_\lambda]$ that intersects it.
\end{lem}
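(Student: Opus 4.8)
The plan is to establish part~(ii) first and then use the resulting envelope structure to obtain the regularity claim~(i).

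For the membership statement, I would start from an arbitrary $p \in \sigma[F] \subseteq \partial L[F]$. Since $L[F] = \cup_\lambda W_E[R_\lambda]$, the point $p$ cannot lie in the interior of any single wedge (that would place it in $\mathrm{int}\, L[F]$), so using compactness of the parameter range together with continuity of the assignment $\lambda \mapsto W_E[R_\lambda]$ I would extract a $\lambda_0$ with $p \in \partial W_E[R_{\lambda_0}]$. Recalling that $\partial W_E[R_{\lambda_0}]$ is the union of the extremal surface $X[R_{\lambda_0}]$ with the future- and past-directed null congruences fired from it, it remains to rule out that $p$ lies in the relative interior of one of those congruences. Here I would invoke the second defining property $p \in \partial\, {}^s\overline{L[F]}$: near an interior point of, say, the future congruence, that congruence is locally a null hypersurface separating $W_E[R_{\lambda_0}] \subseteq L[F]$ from $I^+[W_E[R_{\lambda_0}]] \subseteq J^+[L[F]]$, so no neighborhood of $p$ meets ${}^s\overline{L[F]}$, contradicting $p \in \partial\, {}^s\overline{L[F]}$. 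Hence $p$ must lie on $X[R_{\lambda_0}]$.

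For the tangency statement, I would fix a point $p$ at which $\sigma[F]$ is $C^1$ and which lies on some $X[R_{\lambda_0}]$. Because $W_E[R_{\lambda_0}] \subseteq L[F]$, taking spacelike complements gives ${}^s\overline{L[F]} \subseteq {}^s\overline{W_E[R_{\lambda_0}]}$, and hence $\sigma[F] \subseteq \overline{{}^s\overline{L[F]}} \subseteq \overline{{}^s\overline{W_E[R_{\lambda_0}]}}$. In normal coordinates adapted to $X[R_{\lambda_0}]$ at $p$, with the two-dimensional normal plane carrying a timelike coordinate $t$ and a spacelike coordinate $x$ and $X[R_{\lambda_0}] = \{t = x = 0\}$, the wedge is locally $\{x > |t|\}$ while the closure of its spacelike complement is contained in the \emph{pointed} cone $\{x \leq -|t|\}$. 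Thus the orthogonal projection of $\sigma[F]$ onto the $(t,x)$-plane sends $p$ to the vertex of a pointed cone and has image inside that cone; a $C^1$ surface through the vertex of a pointed cone and contained in it must have vanishing projected differential, forcing $T_p\sigma[F]$ to lie in the kernel of the projection, namely $T_p X[R_{\lambda_0}]$, with equality following by dimension count. Applying this to each extremal surface through $p$ yields tangency to all of them.

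For the regularity claim~(i), I would use the identity $\sigma[F] = \partial J^+[L[F]] \cap \partial J^-[L[F]]$ recorded in the definition of the edge. Each $\partial J^\pm[L[F]]$ is an achronal boundary, hence a Lipschitz hypersurface that is $C^1$ except on the set of endpoints of its null generators; by the standard structure theory of achronal boundaries this exceptional set is sparse (contained in a lower-dimensional subset). Away from the union of these two sparse sets both boundaries are $C^1$, and since $\sigma[F]$ is spacelike by hypothesis it meets them transversally, so their intersection is $C^1$ there. The main obstacle is exactly this last claim: making \emph{sparse} precise and genuinely controlling the ``crease'' locus of $\partial L[F]$ — the places where the active wedge label jumps or where generators accumulate — requires the Lipschitz/generator-endpoint technology for achronal boundaries rather than a short causal argument. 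A secondary technical point is justifying the compactness and continuity of the family $\{W_E[R_\lambda]\}$ used in the membership step, which is what guarantees that a boundary point of the union is attained on the boundary of some individual entanglement wedge.
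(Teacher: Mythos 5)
Your proof is correct, and part (ii) takes a genuinely different route from the paper's. For part (i) you do essentially what the paper does: invoke the structure theory of achronal boundaries (Proposition 6.3.1 of Hawking--Ellis together with the sparseness of generator endpoints) applied to $\partial J^+[L[F]]$ and $\partial J^-[L[F]]$; your extra transversality step is a point the paper passes over entirely, and it can be closed by noting that two null hypersurfaces tangent at a common $C^1$ point share the null generator through that point, which would place a null segment inside $\sigma[F]$ and contradict spacelikeness. For the membership claim in (ii), the paper is more economical: since $\sigma[F] = \partial J^+[L[F]] \cap \partial J^-[L[F]]$, a point $p \in \sigma[F]$ cannot be timelike related to any point of $L[F]$, while every point of $W_E[R_\lambda]$ off the HRT surface \emph{is} timelike related to some point of that wedge; hence $p \in W_E[R_{\lambda_0}]$ already forces $p \in X[R_{\lambda_0}]$. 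This needs only that $p$ lies in \emph{some} wedge (i.e.\ closedness of $L[F]$), whereas your route additionally requires $p \in \partial W_E[R_{\lambda_0}]$ --- hence the compactness/continuity of $\lambda \mapsto W_E[R_\lambda]$ you flag --- plus the case analysis excluding the Cauchy-horizon portions of $\partial W_E[R_{\lambda_0}]$; your exclusion argument is sound, and in fairness both proofs implicitly lean on the same closedness of $L[F]$. For tangency, the paper argues the dual containment: the $X[R_\lambda]$ avoid $\mathrm{int}\,{}^s \overline{L[F]}$, so they cannot cross $\sigma[F]$, and two $C^1$ surfaces that cannot cross must be tangent. Your pointed-cone projection argument (from $\sigma[F] \subseteq \overline{{}^s \overline{W_E[R_{\lambda_0}]}}$, the projected tangent space at $p$ is a linear subspace of a pointed cone, hence zero, so $T_p\sigma[F] = T_pX[R_{\lambda_0}]$ by dimension count) is precisely what makes that one-line claim rigorous, and is the strongest part of your write-up: it buys an explicit first-order justification of tangency at the cost of setting up adapted normal coordinates, while the paper's version buys brevity at the cost of leaving ``no crossing plus $C^1$ implies tangent'' unproved.
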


\begin{proof}
Item~(i) follows immediately from Proposition~6.3.1 of~\cite{HawEll} and the fact that~$\sigma[F]$ is the intersection of a past set and a future set (see~\cite{Per04} for the desired result on sparseness).  Next, by definition~$\sigma[F] \subset \cup_\lambda W_E[R_\lambda]$, and therefore for every~$p \in \sigma[F]$, there is an~$R_\lambda$ such that~$p \in W_E[R_\lambda]$.  Since~$p \in \sigma[F] = \partial J^+[L[F]] \cap \partial J^-[L[F]]$,~$p$ cannot be timelike-related to any point in~$L[F]$.  But the only points in~$W_E[R_\lambda]$ which are not timelike related to any others are those on the HRT surface~$X[R_\lambda]$, and thus~$p \in X[R_\lambda]$ for some~$R_\lambda$. Finally,  we want to show that if $\sigma[F]$ is $C^{1}$ at a point $p$, then it is tangent to any $X[R_{\lambda}]$ that intersects it. Since by definition none of the $X[R_{\lambda}]$ intersect the interior of $^{s} \overline{L[F]}$, they cannot cross $\sigma[F]$.  Since the $X[R_{\lambda}]$ are themselves $C^{1}$ everywhere, any $X[R_{\lambda}]$ that intersects $\sigma[F]$ at a point $p$ where $\sigma[F]$ is $C^{1}$ must be tangent to $\sigma[F]$ at $p$.
\end{proof}

\begin{figure}[t]
\centering
\includegraphics[page=20]{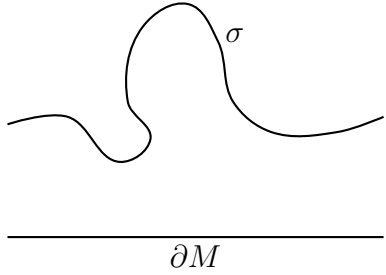}
\caption{Edge surfaces are prohibited from sampling different ``bulk depths'' as shown here; more precisely, they must be normal surfaces.}
\label{fig:RGsplitting}
\end{figure}

We will now see that the defining feature of an edge of a coarse-graining family  prevents it from ``bending'' too much in the spacetime, as shown in Figure~\ref{fig:RGsplitting}.  More precisely, this means that edges are so-called normal surfaces:
\begin{lem}
\label{lem:positiveexpansion}
Let~$\sigma[F]$ be the edge of $L[F]$.  Then the null expansion off of $\sigma[F]$ is non-negative in both directions towards $L[F]$, and in a sufficiently small tubular neighborhood of~$\sigma[F]$ each of the outgoing null congruences fired from~$\sigma[F]$ are~$C^1$ (except possibly on~$\sigma[F]$ itself).
\end{lem}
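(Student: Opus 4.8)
The plan is to show that $\sigma[F]$ is a \emph{normal} surface by combining the tangency to extremal surfaces supplied by Lemma~\ref{lem:edge} with a maximum-principle comparison of null expansions, and then to extract the $C^1$ statement from the resulting sign of the expansion via Raychaudhuri's equation. First I would fix a point $p \in \sigma[F]$ at which $\sigma[F]$ is $C^1$; by Lemma~\ref{lem:edge}(i) such points are dense. At $p$ introduce the two future-directed null normals to $\sigma[F]$: the outgoing $k^a$ (pointing toward $L[F]$, i.e.\ toward the asymptotic boundary) and the ingoing $\ell^a$ (pointing toward $^s\overline{L[F]}$), normalized by $k\cdot\ell=-1$. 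The two null directions ``towards $L[F]$'' are then the future-outgoing $k^a$ and the past-outgoing $-\ell^a$, and the claim to be proven is precisely $\theta_{(k)}\ge 0$ and $\theta_{(-\ell)}\ge 0$ at $p$, i.e.\ that $\sigma[F]$ is normal.

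The core of the argument is a barrier comparison. By Lemma~\ref{lem:edge}(ii) there is an extremal surface $X[R_\lambda]$ passing through $p$ and tangent to $\sigma[F]$ there, and since $X[R_\lambda]\subset\overline{W_E[R_\lambda]}\subset\overline{L[F]}$ while $X[R_\lambda]$ cannot cross into the interior of $^s\overline{L[F]}$ (as used in the proof of Lemma~\ref{lem:edge}), it touches $\sigma[F]$ tangentially from the $\overline{L[F]}$ (outer) side. Thus $X[R_\lambda]$ plays the role of a smooth spacetime support surface for $\sigma[F]$ at $p$. Extremality means the full mean-curvature vector of $X[R_\lambda]$ vanishes, so both of its null expansions are zero; in particular its outgoing expansions vanish. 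I would then invoke the maximum principle for the null mean curvature: a surface that is supported from the outer side by a surface of vanishing outgoing expansion has non-negative outgoing expansion at the point of contact (this is exactly the geometric statement that a convex $\sigma_B$, touched from outside by its tangent geodesics, has non-negative outward expansion). Applying this to each of the two outgoing null directions yields $\theta_{(k)}[\sigma[F]]\ge 0$ and $\theta_{(-\ell)}[\sigma[F]]\ge 0$ at $p$, and hence on the dense $C^1$ locus; at the sparse non-$C^1$ set I would interpret the inequalities in the support (viscosity) sense furnished by the same extremal surfaces. This establishes the normality claim.

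For the regularity statement, once $\theta_{(k)}\ge 0$ (and $\theta_{(-\ell)}\ge 0$) holds on $\sigma[F]$, Raychaudhuri's equation together with the Null Convergence Condition, $d\theta_{(k)}/d\lambda = -\theta_{(k)}^2/(d-1)-\varsigma^2 - R_{ab}k^a k^b \le 0$, shows that the outgoing congruence cannot develop a caustic instantaneously (reaching $-\infty$ from a non-negative value requires a finite affine parameter); since the geodesics depend smoothly on the $C^1$ initial data read off $\sigma[F]$, the outgoing congruences are $C^1$ throughout a sufficiently small tubular neighborhood, away from $\sigma[F]$ itself. The step I expect to be the main obstacle is pinning down the sign: the extremal support surface controls directly only the \emph{spatial} bending of $\sigma[F]$, whereas non-negativity of \emph{both} outgoing null expansions also constrains its bending in time. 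To close this I would additionally use the causal ``rim'' characterization $\sigma[F]=\partial J^+[U[F]]\cap \partial J^-[U[F]]$ to control the temporal component of the mean-curvature vector and reconcile it with the spatial barrier; making this maximum principle rigorous for a surface that is only $C^1$ on a dense set, with support surfaces that are merely null-related to the congruence directions, is the technical crux, as is the treatment of the sparse set where $\sigma[F]$ fails to be $C^1$.
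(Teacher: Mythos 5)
Your core idea --- using the tangent HRT surfaces from Lemma~\ref{lem:edge} as zero-expansion barriers and comparing expansions at the point of contact --- is the same mechanism the paper uses, but you have correctly identified, and then left open, the step that actually carries the proof: converting tangency plus a one-sidedness condition into non-negativity of \emph{both} null expansions. The paper closes this not by a spatial support relation between the two codimension-two surfaces (which, as you note, only controls spatial bending), but by comparing \emph{null hypersurfaces}: near $\sigma[F]$, $\partial L[F]$ is locally a union of two null congruences $N$ (one future-directed, one past-directed, both toward $L[F]$), and the expansion of $N$ at $\sigma[F]$ \emph{is} the null expansion to be bounded. At a $C^1$ point $p$, the HRT surface $X[R_\lambda]$ through $p$ is tangent to $N$ there, and since $X[R_\lambda] \subset W_E[R_\lambda] \subset L[F]$ it can never cross to the far side of $N$. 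Theorem~1 of~\cite{Wal10QST} then yields the contradiction: if $N$ had negative expansion at $p$, the tangent extremal surface would be forced to cross to the future of $N$, i.e.\ to exit $L[F]$. The ``causal side'' information you were missing is thus supplied by $L[F]$-containment of the HRT surfaces (not by the rim characterization of $U[F]$), and running the same argument on the past-directed congruence settles the second null direction.

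Separately, your treatment of the sparse non-$C^1$ set --- ``interpret the inequalities in the support (viscosity) sense'' plus a Raychaudhuri/smooth-dependence argument for regularity --- does not work precisely where the lemma needs it. At a cusp the initial data for the congruence is not $C^1$, so smooth dependence on initial data gives nothing, and generators launched from the two sides of the cusp could a priori cross immediately, destroying $C^1$-ness of $N$ arbitrarily close to $\sigma[F]$; Raychaudhuri's equation controls caustics \emph{along} a generator, not crossings between neighboring generators, so it cannot rule this out. The paper's cusp argument is direct: take the HRT surface $X[R_\lambda]$ through the cusp $p$ (Lemma~\ref{lem:edge}); since $\sigma[F]$ and $X[R_\lambda]$ are achronally separated they lie on a common Cauchy slice $\Sigma$; in a small, approximately flat neighborhood of $p$ in $\Sigma$, the requirement that $\sigma[F]$ avoid the interior of $L[F]$ forces it to lie to one side of $X[R_\lambda]$, which forces the normals to $\sigma[F]$ (defined on the dense $C^1$ subset) to \emph{diverge} at $p$. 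Hence the generators of $N$ near $p$ emanate from $p$ as from a caustic: they have strictly positive expansion and do not cross, which simultaneously gives the expansion statement at cusps and the local $C^1$-ness of the congruence off $\sigma[F]$. Without this step, both claims of the lemma remain unproven on the non-$C^1$ set.
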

\begin{proof}
$\partial L[F]$ is locally (near $\sigma[F]$) a union of two null congruences. Let $N$ be one of these two null congruences in some small neighborhood of $\sigma[F]$, and take the generators of $N$ to be future-directed away from $\sigma[F]$ (the proof is identical for the past-directed generators).  First we show by contradiction that the expansion of~$N$ is non-negative wherever~$\sigma[F]$ is~$C^1$.  Assume that there exists a point~$p \in \sigma[F]$ at which~$\sigma[F]$ is~$C^1$ and $N$ has negative expansion away from $\sigma[F]$.  By Lemma~\ref{lem:edge} there exists an HRT surface~$X[R_\lambda]$ tangent to~$N$ at~$p$. But by Theorem~1 of~\cite{Wal10QST},~$X[R_\lambda]$ must intersect the future of~$N$ (and thus of~$\sigma[F]$), implying that~$X[R_\lambda]$ leaves~$L[F]$.  We immediately obtain a contradiction with the definition of $L[F]$: thus~$N$ has non-negative expansion wherever~$\sigma[F]$ is~$C^1$.

Now we show directly that~$N$ has positive expansion whenever~$p$ lies on a cusp.  If~$p$ lies on a cusp, by Lemma~\ref{lem:edge} there exists an HRT surface~$X[R_\lambda] \in L[F]$ for some~$R_\lambda \in F$ such that~$p \in X[R_\lambda]$.  Since~$\sigma[F]$ and~$X[R_\lambda]$ are achronally separated, there exists a Cauchy slice~$\Sigma$ containing them both.  Now consider a neighborhood~$U_p$ of~$p$ sufficiently small so that the geometry of~$\Sigma \cap U_p$ is approximately flat.  Since the set of points on which~$\sigma[F]$ is not~$C^1$ is sparse,~$U_p$ contains points at which~$\sigma[F]$ is~$C^1$, and therefore at which~$\sigma[F]$ has a well-defined normal (directed towards~$L[F]$) in~$\Sigma$.  As shown in Figure~\ref{fig:lemmaproof}, the fact that~$U_p$ is small allows us to discuss the convergence or divergence of these normals near~$p$.  Now, in order for~$\sigma[F]$ to lie outside of~$\Int[L[F]]$, it must lie to one side of~$X[R]$; this means that the normals to~$\sigma[F]$ on~$\Sigma$ must be diverging.  Thus the null congruences fired off the~$C^1$ portions of~$\sigma[F] \cap U_p$ do not intersect each other, implying that the null congruence~$N$ must also consist of generators originating at~$p$.  Since these generators originate at a caustic, they must have positive expansion.

Finally, it is clear that where~$\sigma[F]$ is~$C^1$,~$N$ must locally be $C^{1}$ as well.  Where~$\sigma[F]$ has a cusp, the above implies that the generators of~$N$ are diverging, so~$N$ is locally~$C^1$ as well.
\end{proof}

\begin{figure}[t]
\centering		
\includegraphics[page=21]{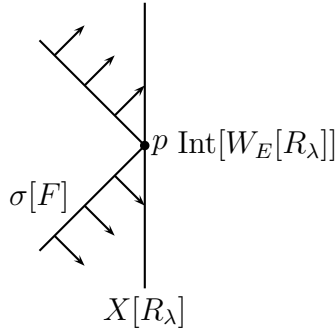}
\caption{In a sufficiently small neighborhood of a point~$p$ lying on a cusp of~$\sigma[F]$, the normals to~$\sigma[F]$ must be diverging in order for~$\sigma[F]$ to lie outside the HRT surface~$X[R_\lambda]$.}
\label{fig:lemmaproof}
\end{figure}

We now obtain the final result:
\begin{thm}
\label{thm:area}
Let $F(r)$ be a continuous IR coarse graining where each of the corresponding $\sigma(r)$ is an edge of $L[F(r)]$.  Then
\be
\frac{d \Area[\sigma(r)]}{dr} \ge 0.
\ee
\end{thm}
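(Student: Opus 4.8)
The plan is to show that the edge $\sigma(r)$ sweeps monotonically \emph{into} the reconstruction region $L[F(r)]$ as $r$ increases, and then to convert this into a statement about area using the normal-surface property established in Lemma~\ref{lem:positiveexpansion}. First I would pin down the direction of motion. Along a continuous IR coarse-graining we have $R_\lambda(r_1) \subset R_\lambda(r_2)$ for $r_1 > r_2$, so entanglement-wedge nesting gives $W_E[R_\lambda(r_1)] \subseteq W_E[R_\lambda(r_2)]$ for every $\lambda$. Taking unions, $L[F(r_1)] \subseteq L[F(r_2)]$, so $L[F(r)]$ shrinks monotonically; equivalently $^s\overline{L[F(r)]}$ grows. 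Since $\sigma(r) = \partial L[F(r)] \cap \partial\, ^s\overline{L[F(r)]}$ lies on the boundary of this growing spacelike complement, its deformation vector $\xi^a \equiv (\partial/\partial r)^a$ must be directed into the region formerly occupied by $L[F(r)]$; in particular $\xi^a$ is achronal (spacelike, or null in degenerate cases like the pure-AdS lightcone) and points toward $L[F(r)]$.

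Next I would decompose $\xi^a$ in the null frame adapted to $\sigma(r)$. As a spacelike codimension-two surface, $\sigma(r)$ has two null normals $k^a_\pm$, which I take to be the two directions ``toward $L[F]$'' appearing in Lemma~\ref{lem:positiveexpansion}. The local bifurcation-surface structure guaranteed by Definition~\ref{def:edge} (with $L[F]$ in the role of the outer wedge and $^s\overline{L[F]}$ the inner wedge) means that any achronal displacement into the $L[F]$ wedge is a non-negative combination $\xi^a = \alpha\, k^a_+ + \beta\, k^a_-$ with $\alpha,\beta \geq 0$ (plus a tangential piece that does not affect the area). The first variation of the area element under such a deformation then gives
\be
\frac{d\,\Area[\sigma(r)]}{dr} = \int_{\sigma(r)} \left(\alpha\, \theta_+ + \beta\, \theta_-\right) dA,
\ee
where $\theta_\pm$ are the expansions of $k^a_\pm$. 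By Lemma~\ref{lem:positiveexpansion} both $\theta_\pm \geq 0$, and since $\alpha,\beta \geq 0$ the integrand is pointwise non-negative wherever $\sigma(r)$ is $C^1$.

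Finally I would dispose of the non-smooth points. By Lemma~\ref{lem:edge} the set where $\sigma(r)$ fails to be $C^1$ is sparse, hence of measure zero in the area integral; moreover Lemma~\ref{lem:positiveexpansion} shows the expansion is strictly positive at such cusps while the relevant outgoing null congruences remain $C^1$ in a tubular neighborhood. These loci therefore only contribute favorably and do not spoil the differentiability of $\Area[\sigma(r)]$, so the integral is well-defined and manifestly non-negative, completing the argument.

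The main obstacle I anticipate is the decomposition step: rigorously justifying that $\xi^a$ has non-negative null coefficients, i.e.\ that the edge genuinely moves into the $L[F]$ wedge and never in a direction that could shrink the area. This is precisely where the nesting $L[F(r_1)] \subseteq L[F(r_2)]$ and the achronal edge structure of Definition~\ref{def:edge} must be invoked carefully; heuristically it is the higher-dimensional analogue of the ``pulling in the endpoints'' step in the proof of Theorem~\ref{thm:Sdiffmonotonic}, but promoting that intuition to a clean statement about the signs of $\alpha$ and $\beta$ in general dimension—while simultaneously controlling the moving cusps so that one may differentiate under the integral—is the delicate part of the proof.
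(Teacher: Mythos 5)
Your proposal is correct and takes essentially the same route as the paper's own proof: the paper likewise uses nesting of the $L[F(r)]$ to argue the edges move achronally ``outwards'' (phrased there via the normal $h^a$ within the hypersurface $H$ foliated by the $\sigma(r)$), writes this deformation as a non-negative combination of the two null directions whose expansions are non-negative by Lemma~\ref{lem:positiveexpansion}, and disposes of the sparse cusp set by noting that generators can only begin (never end) there, so cusps only add area. The decomposition step you flag as delicate is handled in the paper exactly as you anticipate, via $\sigma(r_2)\subset L[F(r_1)]$ for $r_2>r_1$, which forbids timelike separation of the edges.
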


\begin{proof}
Let~$H$ be the hypersurface foliated by~$\{\sigma(r)\}$, and let~$h^a$ be the outward-directed normal vector field in~$H$ to the~$\{\sigma(r)\}$.  Note that~$h^a$ cannot be timelike anywhere, since under a coarse-graining the edges~$\sigma(r)$ must move outwards (i.e.~for any~$r_2 > r_1$,~$\sigma(r_2) \subset L[F(r_1)]$, so~$\sigma_1$ and~$\sigma_2$ cannot be timelike-separated).  Since the~$\sigma(r)$ need not be everywhere~$C^1$,~$h^a$ may be singular (on a set of measure zero); however, Lemma~\ref{lem:positiveexpansion} guarantees that outward-directed integral curves of~$h^a$ only start and never end at such singularities of~$h^a$.

Now, by Lemma~\ref{lem:positiveexpansion}, the expansions of the outwards-directed null congruences from each of the~$\sigma(r)$ are non-negative.  But the expansion of the integral curves of~$h^a$ is just a linear combination of these two null expansions (with non-negative coefficients).  Therefore the expansion along the~$h^a$ congruence is non-negative as well, and so the area of~$C^1$ portions of the~$\sigma(r)$ is non-decreasing.  Moreover, since generators of the~$h^a$ congruence never leave~$H$ as they flow outward, singularities of~$h^a$ where new generators appear can only increase the area of the~$\sigma(r)$.  Thus the areas of the~$\sigma(r)$ are non-decreasing under a flow along~$h^a$.
\end{proof}

This result is in fact an infinite family of area laws.  These area laws apply to non-timelike foliations, but in particular they can include causal horizons: in certain cases, the Hawking area law for causal horizons is a special case of these general area laws (e.g. the early stages of AdS-Vaidya collapse).  As an aside, note that it is simple to show that the so-called outer entropy of slices of causal horizons is bounded from above by their area~\cite{EngWalTA}, so that our area law immediately also suggests an outer entropy increase theorem as well; indeed, such a result was found in~\cite{NomRem18}.

In this section, we had to resort to bulk arguments to prove an area law, in contrast with our results in three dimensions, in which the area law simply manifested from SSA in the boundary theory. Moreover, while in the three dimensional case we could understand SSA as the dual of the area law, we have no such interpretation here. This is due to the absence of an appropriate generalization of the differential entropy to higher dimensions. The area monotonicity property, however, coupled with the obvious significance of these degenerate surfaces to our coarse-graining procedure, suggests that there exists a higher dimensional analogue of the differential entropy. In fact, our coarse-graining procedure and the area monotonicity theorem are sufficiently constraining that we expect to be able to use them to \textit{find} the requisite quantity.

\section{Quantum Generalization}
\label{sec:quantum}

We have so far neglected the backreaction of bulk quantum fields on the geometry.  While this limit is instructive, any results derived in it should be robust under quantum corrections to the geometry in order to be physically significant.  In this regime, a fluctuation to the spacetime metric is viewed as an operator whose expectation value is well-approximated by an expansion in powers of $G_{N}\hbar$.

The holographic computation of the von Neumann entropy therefore incorporates quantum corrections to the HRT surface; consequently the Dong-Harlow-Wall argument for entanglement wedge reconstruction includes quantum corrections as well. Since the edge surface $\sigma[F]$ is defined via HRT surfaces, it too changes under quantum corrections. This is consistent with the rule of thumb that perturbative quantum gravity effects can violate area monotonicity theorems: the quantity which is monotonic is a ``quantum-corrected area'' known as the generalized entropy of a surface $\sigma$\footnote{Note here that $\sigma$ must be Cauchy-splitting for this to be well-defined: i.e. $\sigma$ must divide a Cauchy slice into two distinct components.}, defined as
\be
S_{\mathrm{gen}}[\sigma] \equiv \frac{\mathrm{Area}[\sigma]}{4G_{N}\hbar} + S_{\mathrm{out}}(\sigma),
\ee
where $S_{\mathrm{out}}(\sigma)$ is the von Neumann entropy of the propagating quantum fields on a Cauchy slice of the exterior of $\sigma$.  $S_{\mathrm{gen}}$ was first defined by Bekenstein~\cite{Bek72, Bek73, Bek74}, and the approach of using it as a ``quantum-corrected area'' has since then been applied with remarkable success to generalize classical theorems to the semiclassical regime (see e.g.~\cite{Bek74, Wal10QST, EngWal14, BouFis15, BouEng15c}). While a comprehensive justification for the replacement of $A\rightarrow 4 G_{N}\hbar S_{\mathrm{gen}}$ is beyond the scope of this paper (see e.g.~\cite{BouFis15} for a recap), we cannot resist pointing out that there is significant evidence that the combined quantity $S_{\mathrm{gen}}$ is UV-finite; this provides evidence that the appropriate generalization of the area in perturbative quantum gravity is the generalized entropy.  Further evidence is provided by the appropriate quantum generalization of the HRT prescription: HRT surfaces (which classically extremize the area functional) are replaced by \textit{quantum} extremal surfaces, which extremize~$S_\mathrm{gen}$~\cite{EngWal14}, proven recently in~\cite{DonLew17} (the earlier work of~\cite{FauLew13} computed the corrections at first order, where the generalized entropy of the classical and quantum extremal surfaces agrees).  The entropy of a boundary region~$R$ is then computed by~$S_\mathrm{gen}[X_\mathrm{quant}[R]]$, with~$X_\mathrm{quant}[R]$ the quantum extremal surface homologous to a Cauchy slice of~$R$~\footnote{The prescription of~\cite{EngWal14} instructs us to compute the von Neumann entropy from the generalized entropy of the quantum extremal surface with smallest $S_{\mathrm{gen}}$. Here we ignore the subtleties with quantum extremal surfaces whose generalized entropy differs by $\Ocal(1)$ bits.}.

We should thus expect that the natural generalization of the area theorem in Section~\ref{sec:generald} is a Generalized Second Law: a monotonicity theorem for $S_{\mathrm{gen}}$. This is indeed the case, as we shall now show.

First, let us note that because~$S_\mathrm{gen}$ is not locally defined, the appropriate quantum generalization of the classical expansion -- the so-called quantum expansion~\cite{Wal10QST, EngWal14, BouFis15} -- requires a functional derivative of~$S_\mathrm{gen}$ under local deformations of~$\sigma$.  That is, the quantum expansion of a surface~$\sigma$ at~$p \in \sigma$ in the null direction~$k^a$ is defined as
\be
\Theta_k[\sigma,p] \equiv \frac{4G_{N}\hbar}{\delta A} \frac{\delta S_{\mathrm{gen}}[\sigma]}{\delta \sigma_k(p)},
\ee
where~$\delta S_\mathrm{gen}/\delta \sigma_k(p)$ schematically refers to a deformation of~$\sigma$ at~$p$ in the~$k^a$ direction of area~$\delta A$; see~\cite{BouFis15} for the precise definition.  Bekenstein's famous GSL for causal horizons is equivalent to the statement that $\Theta$ is non-negative on future causal horizons (and nonpositive towards the future on past causal horizons).  Likewise, the quantum extremal surfaces~$X_\mathrm{quant}[R]$ mentioned above have vanishing quantum expansion in all null directions.

We may now examine the quantum generalizations of the results of Section~\ref{sec:generald} (we will save comments on quantum generalizations of Section~\ref{sec:SSA} for later).  First, note that Definition~\ref{def:edge} remains unchanged:~$\sigma[F]$ is still the edge of the union of entanglement wedges~$L[F]$, although now these entanglement wedges are obtained from \textit{quantum} rather than classical extremal surfaces.  Lemma~\ref{lem:edge} then remains unaltered under the mild assumption that quantum extremal surfaces are~$C^1$.\footnote{Technically, the location of a quantum extremal surface is ``fuzzy'' due to fluctuations in the spacetime metric; see~\cite{EngWal14, BouFis15, Lei17} for a discussion. Here we will treat these under the assumption, justified in greater detail in the above references, that a notion of tangency can still be defined.}  Similarly, Lemma~\ref{lem:positiveexpansion} remains unchanged, except that the the null expansion of~$\sigma$ is replaced by the quantum expansion.  The reason for this straightforward modification is that the crucial result, which compares the classical expansion of tangent null hypersurfaces, admits a quantum generalization in terms of their quantum expansion~\cite{Wal10QST}.  This establishes the result wherever~$\sigma$ is~$C^1$, while at cusps the classical expansion must be strictly positive, and thus perturbative quantum corrections will not alter its sign.  Finally, it then follows that Theorem~\ref{thm:area} is replaced by a GSL:
\begin{thm}
Let $F(r)$ be a continuous IR coarse graining where each of the corresponding $\sigma(r)$ is an edge of $L[F(r)]$. Then
\be
\frac{d S_\mathrm{gen}[\sigma(r)]}{dr} \ge 0.
\ee
\end{thm}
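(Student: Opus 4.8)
The plan is to transcribe the proof of Theorem~\ref{thm:area} essentially verbatim, replacing the classical area by $S_\mathrm{gen}$ and the classical null expansion by the quantum expansion $\Theta_k$, and to lean on the quantum versions of Lemmas~\ref{lem:edge} and~\ref{lem:positiveexpansion} already sketched above. First I would let $H$ be the hypersurface foliated by $\{\sigma(r)\}$ and let $h^a$ be the outward-directed normal to the $\sigma(r)$ inside $H$. Exactly as in the classical argument, $h^a$ cannot be timelike: for $r_2 > r_1$ we still have $\sigma(r_2) \subset L[F(r_1)]$, since coarse-graining shrinks each entanglement wedge and hence $L[F(r)]$, so consecutive edges are achronally separated. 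I would also require each $\sigma(r)$ to be Cauchy-splitting so that $S_\mathrm{gen}[\sigma(r)]$ (and thus $\Theta$) is well-defined, with $S_\mathrm{out}$ taken to be the entropy of the bulk fields on the outer-wedge (i.e.~$L[F(r)]$) side of $\sigma(r)$.

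Next, I would decompose $h^a = a\, k_+^a + b\, k_-^a$ as a (pointwise) non-negative combination of the two outward null normals to $\sigma(r)$ directed towards $L[F(r)]$, which is possible precisely because $h^a$ is non-timelike and outward-pointing. Using the definition of $\Theta_k$ as the functional derivative of $S_\mathrm{gen}$ under a null deformation, together with the first-order linearity of that derivative in the deformation vector, the total derivative takes the schematic form
\[
\frac{d S_\mathrm{gen}[\sigma(r)]}{dr} = \int_{\sigma(r)} \left( a\, \Theta_{k_+} + b\, \Theta_{k_-} \right) dA,
\]
with the precise measure factors suppressed. The quantum version of Lemma~\ref{lem:positiveexpansion} then supplies $\Theta_{k_\pm} \geq 0$ wherever $\sigma(r)$ is $C^1$: the only new input is that the tangent-null-congruence comparison used classically (Theorem~1 of~\cite{Wal10QST}) has a quantum counterpart phrased in terms of $\Theta$, so that a quantum extremal surface $X_\mathrm{quant}[R_\lambda]$ tangent to an outward congruence of negative quantum expansion would be forced to leave $L[F]$, contradicting $X_\mathrm{quant}[R_\lambda] \subset L[F]$. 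With $a,b \geq 0$ and $\Theta_{k_\pm} \geq 0$, the integrand is non-negative.

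It then remains to dispose of the cusps, where $\sigma(r)$ fails to be $C^1$ and the functional derivative defining $\Theta$ is not directly available. Here I would reuse the classical cusp analysis from Lemma~\ref{lem:positiveexpansion}: at a cusp the \emph{classical} expansion of the outward null congruence is strictly positive (its generators emanate from a caustic), and since the quantum correction to the expansion is perturbative in $G_N\hbar$, it cannot flip this sign, so $\Theta_{k_\pm} > 0$ there as well. Moreover, new generators of the $h^a$ flow only start, never terminate, at such singular loci, so they can only add to $S_\mathrm{gen}$. Combining the $C^1$ and cusp contributions yields $d S_\mathrm{gen}[\sigma(r)]/dr \geq 0$.

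I expect the main obstacle to be making the central step rigorous despite the \emph{nonlocality} of $S_\mathrm{gen}$: unlike the classical area, $S_\mathrm{gen}$ is not the integral of a local density, so the statement ``the expansion of $h^a$ is a non-negative combination of the two null expansions'' must be justified as a genuine first-order variation of a functional rather than a pointwise algebraic identity, and one must check that the decomposition $h^a = a\,k_+^a + b\,k_-^a$ commutes with that variation. The associated subtlety is the fuzziness of the quantum extremal surfaces entering $L[F]$ and of $\sigma(r)$ itself; I would handle this under the standard tangency assumption already flagged for Lemma~\ref{lem:edge}, treating the resulting corrections as controlled perturbatively in $G_N\hbar$.
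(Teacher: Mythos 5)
Your proposal is correct and takes essentially the same route as the paper: the paper's proof likewise transcribes the classical argument of Theorem~\ref{thm:area} with the area replaced by $S_\mathrm{gen}$ and the null expansion by the quantum expansion, relying on the quantum versions of Lemmas~\ref{lem:edge} and~\ref{lem:positiveexpansion} (with cusps handled, as you do, by noting that the strictly positive classical expansion there cannot be flipped by perturbative $G_N\hbar$ corrections). The nonlocality obstacle you flag at the end is exactly what the paper addresses by invoking the ``zigzag argument'' of~\cite{BouEng15c}, which replaces the pointwise decomposition $h^a = a\,k_+^a + b\,k_-^a$ by a sequence of alternating null deformations, so that only null variations of $S_\mathrm{gen}$ --- i.e.\ quantum expansions, each non-negative --- are ever needed.
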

The proof is essentially the same as for the classical case, except that the non-negativity of the quantum expansion of~$\sigma(r)$ in both null directions towards~$L[F(r)]$ guarantees that the generalized entropy increases rather than the area.  The technology is essentially the same ``zigzag argument'' used in~\cite{BouEng15c}.

Thus we have found that the area law, Theorem~\ref{thm:area}, associated to our IR coarse-graining admits a quantum generalization as a GSL.  A natural question is whether the precise entropic connection to SSA via differential entropy exists in a three-dimensional perturbatively quantum bulk: after all, the monotonicity of~$S_\mathrm{diff}$ from SSA is a general, purely field theoretic statement that makes no requirement on the bulk (or even the existence of one).  However, the key dictionary entry used to translate the monotonicity of~$S_\mathrm{diff}$ to a bulk area law -- that is, the mapping of~$S_\mathrm{diff}$ to the area of the differential entropy surfaces -- must receive quantum corrections whose behavior is at this point unclear.  We plan to investigate these corrections in future work.

\section{Discussion}
\label{sec:disc}

Coarse-graining is expected to be the fundamental aspect of quantum gravity that permits the emergence of semiclassical spacetime, i.e. the regime in which the UV data of quantum gravity decouples from the low-energy degrees of freedom. A significant challenge in any attempt to understand this process is the lack of a precise notion of UV gravitational data. In this paper, we have used AdS/CFT to investigate this question via the boundary theory, circumventing this problematic issue. The governing principle under which we operate is that, by the UV/IR correspondence, this data is encoded in the CFT IR.  This motivates a precise IR coarse-graining, which through the lens of quantum error correction can be viewed as the erasure of bulk data.

Explicitly, by restricting our knowledge of a state $\rho$ to reduced density matrices of a set of regions (i.e.~a coarse-graining family), we remove IR data such as long-range correlation functions and long-range entanglement. The AdS/CFT dictionary entry of subregion/subregion duality automatically translates this into an erasure of a region of the bulk interior. Erasing larger bulk regions corresponds to discarding a larger sector of the boundary IR.

Regardless of how well-motivated our procedure may be, without evidence that it makes contact with the actual coarse-graining mechanism built into quantum gravity, it is nothing more than a framework for removing IR information in a quantum field theory. We find that this is not the case, as our coarse-graining procedure passes a highly nontrivial test: it gives rise to holographic area monotonicity theorems in the classical regime, and Generalized Second Laws in perturbative quantum gravity.

Does every area law have a statistical significance in quantum gravity, and if not, why should we expect that ours do? The coincidence of a well-motivated coarse-graining procedure and its realization as an area monotonicity property (which behaves correctly under quantum corrections) is too strong to ignore. However, we accept that some of our readers may remain skeptical at this stage. Not to fear; the connection goes deeper.

In three bulk dimensions, the increase of area of a family of surfaces corresponding to a particular (continuous) coarsening is \textit{precisely} a result of strong subadditivity of the von Neumann entropy. Since strong subadditivity is a measure of the irreversibility of the removal of a subsystem, our area laws in three bulk dimensions are exactly the gravitational statement of irreversibility of the coarse-graining. Whether or not these particular (three-dimensional) area laws were suspected of having statistical quantum gravity significance prior to our work, the conclusion is now inevitable: they are a result of statistical coarse-graining. Moreover, our mechanism is a generalized version of the one which gives rise to the Casini-Huerta version of the $c$-theorem~\cite{CasHue04,CasHue06}, again indicating the connection with the irreversibility of this IR coarse-graining.

This roughly summarizes the framework and its justification; we now briefly comment on some interesting applications beyond the existence of a dual area theorem.

\paragraph{Mixed Signature Area Law:} Our construction of mixed-signature area laws is particularly intriguing due to their relation to holographic screens~\cite{Bou99d}, which are essentially local analogs of event horizons and which exist in general spacetimes and can have mixed signature.  As shown in~\cite{BouEng15a,BouEng15b}, holographic screens obey an area law irrespective of their signature: the area of so-called future holographic screens is always increasing towards the past on timelike portions and outwards on spacelike ones, and the time-reverse is true of past holographic screens.  An entropic explanation of this law on \textit{spacelike} portions~\footnote{This is the area law for so-called future outwards trapping horizons~\cite{Hay93} or dynamical horizons~\cite{AshKri02}.} was given in~\cite{EngWal17b} in the context of AdS/CFT, but an explanation for the general mixed-signature case is lacking.  It is striking, however, that the directions of area growth (and by extension, of the signature changes) of past and future holographic screens are identical to those of the hypersurfaces~$H^\pm$ constructed from differential entropy.  We emphasize that while we only constructed explicit examples of such hypersurfaces in pure AdS, this behavior is general.  The universality of such mixed-signature area laws indicates that the same mechanism may give rise to them all.  We hope that this observation can be used to explain a mysterious aspect of holographic screens in general spacetimes.

\paragraph{Further Work on Differential Entropy in Two Dimensions:} 

In three bulk dimensions, differential entropy provided a remarkably crisp entropic interpretation of bulk area laws. However, while monotonicity of the differential entropy has a clear interpretation in terms of SSA, a precise physical interpretation of differential entropy itself has yet to be provided.  There are some hints that such an interpretation may exist.  For instance, in the special cases we have studied, the vanishing of~$S_\mathrm{diff}$ indicates that the coarse-graining family from which it is constructed is not a coarse-graining at all, i.e.~it contains essentially all IR data of the boundary theory.  Moreover, when evaluated in Poincar\'e invariant vacuum states,~$S_\mathrm{diff}$ is effectively an integrated version of the Casini-Huerta~$c$-function.

It is therefore desirable to develop an understanding of the differential entropy beyond the classical bulk regime. When the bulk is perturbatively quantum (and thus the von Neumann entropies of boundary regions are computed from quantum rather than classical extremal surfaces in the bulk), does~$S_\mathrm{diff}$ correspond to any interesting bulk object?  Such a bulk object would be monotonic by virtue of the fact that~$S_\mathrm{diff}$ is as well (recall that~$S_\mathrm{diff}$ is monotonic in any unitary relativistic QFT, regardless of the existence or regime of a dual bulk).  It is therefore natural to expect that the quantum-corrected bulk object dual to~$S_\mathrm{diff}$ should be a bulk generalized entropy~$S_\mathrm{gen}$, which would provide tantalizing evidence that~$S_\mathrm{diff}$ really is computing some fundamentally important object.  However, it is conceivable that more generally~$S_\mathrm{diff}$ may compute some other type of quantum-corrected area in the bulk, suggestive of other possible quantum generalizations of area beyond the generalized entropy; we leave an investigation of this question to future work.

\paragraph{Higher Dimensional Differential Entropy:}

The precise interpretation of area monotonicity in terms of SSA in three dimensions immediately calls for some extension to higher dimensions. To our knowledge, except for higher-dimensional configurations with sufficient symmetry to reduce to a three-dimensional problem~\cite{MyeHea14}, there is presently no such generalization.  To some extent, this is because so far~$S_\mathrm{diff}$ has been understood only in the holographic context as a dual computation of the area of certain bulk surfaces.  However, it is manifest from the results in this paper that~$S_\mathrm{diff}$ has crucially important monotonicity properties independent of the existence of any holographic dual.  This observation provides an invaluable guide in constructing higher-dimensional generalizations of differential entropy.  For instance, since in two boundary dimensions we may think of~$S_\mathrm{diff}$ as an integral over a generalized Casini-Huerta~$c$-function of the regions~$R_\lambda$, natural guesses for higher-dimensional objects would involve integrals over entropic~$F$-functions,~$a$-functions, etc.  We might hope that a judicious guess could then produce an object which still computes the area of bulk surfaces constructed from the entanglement wedges of the coarse-graining family used to define it, thus yielding an entropic interpretation of our area laws in higher dimensions.

\paragraph{Bulk Extent of Coarse-Graining Procedure:} In a previous paper~\cite{EngFis15}, we found that the differential entropy cannot compute areas of surfaces inside a bulk region associated to holographic screens. In particular, the differential entropy is insensitive to a subset of the interior of holographic screens. Since the presence of a single holographic screen implies the existence of an infinite set of them, there is some outer envelope whose interior cannot be probed by the differential entropy construction. For purposes of hole-ographic bulk reconstruction in dynamical gravity, this posed a serious problem. For our purposes, however, this is instead an interesting feature: our area laws must avoid this region of strong dynamical gravity.    This is possibly related to the non-locality of quantum gravity. Interestingly, it is possible in principle for our area laws to still approach close to a singularity by avoiding this hidden region; we plan to determine whether this is indeed the case in future work. 


\section*{Acknowledgements}

It is a pleasure to thank Ahmed Almheiri, Jos\'e Barb\'on, Raphael Bousso, Alex Belin, Aitor Lewkowycz, Esperanza Lopez, Don Marolf, Matthew Roberts, Eva Silverstein, Douglas Stanford, and Lenny Susskind for useful conversations and discussion.  NE thanks the MIT Center for Theoretical Physics, the Stanford Institute for the Theoretical Physics, and Imperial College London for hospitality during various stages of this work. The work of NE is supported in part by NSF grant PHY-1620059 and in part by the Simons Foundation, Grant 511167 (SSG).  SF was supported by STFC grant ST/L00044X/1 and thanks the University of California, Santa Barbara for hospitality while some of this work was completed.

\appendix

\section{Differential Entropy in AdS$_3$}
\label{app:mixedarealaw}

Here we provide details on the construction of the mixed-signature area law presented in~\ref{sec:SSA}.  First, note that general spacelike geodesics in global AdS~\eqref{eq:AdS3} can be obtained by boosting the geodesics~\eqref{eq:staticspacelikegeodesics} on constant-time slices; these general geodesics are given by~$(t,r_*,\phi) = (T(s),R_*(s),\Phi(s))$ with
\begin{subequations}
\label{eqs:generalgeodesics}
\begin{align}
\tan((T(s) - t_0)/\ell) &= s \tan (\Delta t/ 2\ell), \\
\tan (\Phi(s) - \phi_0) &= s \tan (\Delta \phi/2), \\
\sin(R_*(s)/\ell) &= \sqrt{\frac{\cos^2(\Delta \phi/2) + s^2 \sin^2 (\Delta \phi/2)}{\cos^2(\Delta t/2\ell) + s^2 \sin^2 (\Delta t/ 2\ell)}},
\end{align}
\end{subequations}
where~$s \in (-1,1)$ is a (non-affine) parameter along the geodesic and the endpoints of the geodesic lie at~$(T(\pm 1), \Phi(\pm 1)) = (t_0 \pm \Delta t/2, \phi_0 \pm \Delta \phi/2)$.  In order for these endpoints to be spacelike-separated, we require~$0 \leq \Delta t/\ell < \pi - |\Delta \phi - \pi|$.  Equivalently, in terms of the null separations~$\Delta u = \Delta t/\ell + \Delta \phi$,~$\Delta v = \Delta t/\ell - \Delta \phi$, we must have~$0 < \Delta u < 2\pi$,~$0 > \Delta v > -2\pi$.  As described in the main text, symmetry considerations allow us to restrict just to the regime~$\Delta \phi \leq \pi$, or~$\Delta u \leq 2\pi - |\Delta v|$.

Now, consider the family of intervals shown in Figure~\ref{fig:boostedintervals}.  Using~$\lambda$ as a parameter along this family, we have
\be
\label{eq:boostedintervals}
t_0(\lambda) = 0, \quad \phi_0(\lambda) = \lambda, \quad \Delta t(\lambda)/\ell = \frac{1}{2}(\Delta u + \Delta v), \quad \Delta \phi(\lambda) = \frac{1}{2}(\Delta u - \Delta v).
\ee
To obtain the curves~$\sigma_B^\pm$, we follow the construction outlined in Section~4 of~\cite{MyeHea14}.  The deviation vector~$(\partial_\lambda)^a$ along the family of bulk geodesics is simply given by~$(\partial_\phi)^a$, and therefore its component~$\eta^a$ normal each geodesic is given by
\be
\eta^a = (\partial_\phi)^a - \frac{\partial_\phi \cdot \xi}{\xi^2} \xi^a,
\ee
where~$\xi^a = (\partial_s)^a = T'(s) (\partial_t)^a + R_*'(s) (\partial_{r_*})^a + \Phi'(s) (\partial_\phi)^a$ is a tangent vector to the geodesics.  Computing~$\eta^2$, we 
find that on each geodesic~$\eta^2 = 0$ at
\be
s^\pm = \pm \left|\frac{\tan(\Delta t/(2\ell))}{\tan(\Delta \phi/2)}\right| = \pm \left|\frac{\tan((\Delta u + \Delta v)/4)}{\tan((\Delta u - \Delta v)/4)}\right|.
\ee
The locations of the bulk curves~$\sigma_B^\pm$ are found by inserting this expression back into~\eqref{eqs:generalgeodesics}, which yields the results~\eqref{subeq:timelikearealawt} and~\eqref{subeq:timelikearealawr} in the main text and reproduced here for convenience:
\bea
t_{\sigma_B} &= \pm \ell \arctan\left|\cot\left(\frac{\Delta u - \Delta v}{4}\right)\tan^2\left(\frac{\Delta u + \Delta v}{4}\right)\right|, \\
r_{\sigma_B} &= \frac{\ell}{2}\left|\cot\left(\frac{\Delta u}{2}\right) - \cot\left(\frac{\Delta v}{2}\right)\right|.
\eea

To compute the signature of the surfaces~$H^\pm$ traced out by~$\sigma_B^\pm$ with fixed~$\Delta v$ and varying~$\Delta u$, note that an orthogonal basis of tangent vectors on the~$H^\pm$ is provided by~$(\partial_\phi)^a$ and~$\chi^a = t_{\sigma_B}'(\Delta u) (\partial_t)^a + r_{\sigma_B}'(\Delta u) (\partial_r)^a$, where here we are thinking of~$t_{\sigma_B}(\Delta u)$ and~$r_{\sigma_B}(\Delta u)$ as functions of~$\Delta u$.  Since~$(\partial_\phi)^a$ is always spacelike, the signature of~$H^\pm$ is determined by the sign of~$\chi^2$, which comes out to
\be
\sgn(\chi^2) = \sgn\left[ 6\sin^2\left(\frac{\Delta v}{2}\right) + \sin^2\left(\frac{\Delta u + \Delta v}{2}\right) - 2\sin^2\left(\frac{\Delta u}{2}\right) \right].
\ee
Clearly~$\chi^2 > 0$ for~$\Delta u$ sufficiently close to~$0$ or~$2\pi-|\Delta v|$, so~$H^\pm$ are always spacelike near the asymptotic boundary and near~$r = 0$.  In order to change signature,~$\chi^2$ must therefore have a root in~$\Delta u$.  By turning the equation~$\chi^2 = 0$ into a quadratic in~$\sin^2 (\Delta u/2)$, it is straightforward to see that the sign of the discriminant is~$\sgn(2-9\sin^2(\Delta v/2))$, which is positive for~$\Delta v > -2\arcsin(\sqrt{2}/3) \approx -0.31 \pi$ or~$\Delta v < -2\pi + 2\arcsin(\sqrt{2}/3)$.  In the latter case, the values of~$\Delta u$ for which~$\chi^2 = 0$ lie outside the range~$(0,2\pi - |\Delta v|]$, and therefore~$H^\pm$ are everywhere spacelike.  On the other hand, in the former case~$\chi^2$ does indeed change sign for~$\Delta u \in (0,2\pi - |\Delta v|]$, so~$H^\pm$ have mixed signature.

\bibliographystyle{jhep}
\bibliography{all}

\end{document}